\newtheorem{theorem}{Theorem}[section]
\newtheorem{lemma}{Lemma}[section]
\newtheorem{corollary}{Corollary}[section]
\newtheorem{proposition}{Proposition}[section]
\theoremstyle{definition}
\newtheorem{remark}[theorem]{Remark}
\newcommand{\labelnummer}{\mbox{\normalfont (\roman{numcount})}}%
\let\curlabelspeicher\@currentlabel%
    \let\saveitem\item%
    \def\item{\saveitem%
      \def\@currentlabel{{\upshape\curlabelspeicher}$\,$\labelnummer}}%
    \let\savelabel\label%
    \def\label##1{\savelabel{##1}%
      \@bsphack%
        \ifmmode\else%
          \protected@write\@auxout{}%
          {\string\newlabel{##1item}{{\labelnummer}{\thepage}}}%
        \fi%
      \@esphack%
    }%
\renewcommand{\appendix}{\def\thesection{\textsc{Appendix}}}
 \let\leq\le
 \let\geq\ge
\DeclareMathOperator{\tr}{tr\kern1pt}
\newif\ifper\pertrue
\def\per{.}
\def\bti{\@ifnextchar[\bbti\bbbti}
\def\bbti[#1]#2{#2, #1.}
\def\bbbti#1{#1.}
\def\z{\@ifnextchar[\zz\zzz}
\def\zz[#1]#2#3#4#5{\perfalse\emph{#2} \textbf{#3}, #4 (#5) [#1]}
\def\zzz#1#2#3#4{\emph{#1} \textbf{#2}, #3 (#4)\ifper\per\fi\pertrue}
\def\pub{\@ifstar\pubstar\pubnostar}
\def\pubnostar{\@ifnextchar[\@@pubnostar\@pubnostar}
\def\@@pubnostar[#1]#2#3#4{#2, #3, #4, #1\ifper\per\fi\pertrue}
\def\@pubnostar#1#2#3{#1, #2, #3\ifper\per\fi\pertrue}
\def\pubstar[#1]#2#3#4{\perfalse #2, #3, #4 [#1]\pertrue}
\newcommand{\beq}{\begin{equation}}
\newcommand{\eeq}{\end{equation}}
\newcommand{\ba}{\begin{array}}
\newcommand{\ea}{\end{array}}
\newcommand{\bea}{\begin{eqnarray}}
\newcommand{\eea}{\end{eqnarray}}
\newcommand{\R}{\mathbb{R}}
\newcommand{\Z}{\mathbb{Z}}
\newcommand{\N}{\mathbb{N}}
\DeclareMathOperator{\supp}{\mathrm{supp}}
\def\P{I\kern-.30em{P}}
\def\E{I\kern-.30em{E}}
\renewcommand{\E}{\mathbb{E}\mkern2mu}
\renewcommand{\P}{\mathbb{P}}
\begin{document}

\title[Spectral Averaging and the Density of States]{Some remarks on spectral averaging and the local density of states for random Schr\"odinger operators on $L^2 ( \R^d )$}

\author[J.\ M.\ Combes]{Jean Michel Combes}
\address{Centre de Physique Th\'eorique \\
CNRS Luminy Case 907 \\
Marseille Cedex 9 \\
13009 Marseille France}
\email{jmcombes@cpt.fr}

\author[P.\ D.\ Hislop]{Peter D.\ Hislop}
\address{Department of Mathematics,
    University of Kentucky,
    Lexington, Kentucky  40506-0027, USA}
\email{peter.hislop@uky.edu}

\thanks{PDH thanks the Centre de Physique Th\'eorique, CNRS, Marseille, France, for support during the time this work was begun.}

\begin{abstract}
We prove some local estimates on the trace of spectral projectors for random Schr\"odinger operators restricted to cubes $\Lambda \subset \R^d$. We also present a new proof of the spectral averaging result based on analytic perturbation theory. Together, these provide another proof of the Wegner estimate with an explicit form of the constant and an alternate proof of the Birman-Solomyak formula. We also use these results to prove the Lipschitz continuity of the local density of states function for a restricted family of random Schr\"odinger operators on cubes $\Lambda \subset \R^d$, for $d \geq 1$. The result holds for low energies without a localization assumption but is not strong enough to extend to the infinite-volume limit.
\end{abstract}

\maketitle \thispagestyle{empty}

\begin{center}
Dedicated to the memory of Erik Baslev
\end{center}

\tableofcontents


\section{Statement of the Problem and Result}\label{sec:introduction}
\setcounter{equation}{0}

This note is another presentation of spectral averaging with applications to the study of the local density of states ($\ell$DOS) for random Schr\"odiner operators on cubes $\Lambda \subset \R^d$, for $d \geq 1$. Spectral averaging is revisited using tools from analytic perturbation theory, an area in which Erik Balslev was an expert. We also prove an upper bound on the trace of spectral projectors using a Poincar\'e-type inequality for eigenfunctions. We present three applications: 1) we prove the Wegner estimate with an explicit form of the constant, 2) we prove that the local density of states function is Lipschitz continuous in the energy, independent of localization, and 3) we  give  a simple proof of the Birman-Solomyak Theorem. The spectral averaging result applies to self-adjoint operators of the form $H_\omega = H_0 + \omega u^2$ on a separable Hilbert space where $H_0$ has discrete spectrum. A version of the Birman-Solomyak formula for the spectral shift function is proved in this setting.

 The random Schr\"odinger operators that we study in the applications have the form
\beq\label{eq:rso1}
H_\omega := H_0 +  V_\omega,
\eeq
on $L^2 (\R^d)$, where $H_0$ is a self-adjoint operator such as the Laplacian $H_0 = - \Delta$ or a magnetic Schr\"odinger operator,  and the potential $V_\omega$ is a random, ergodic process described as follows.

\vspace{.1in}
\noindent
\textbf{Hypothesis 1 [H1]. }Single-site potential: 
 Let $u_0(x) \in L^\infty_0 (\R^d; \R)$ be a compactly-supported function satisfying 
$$
0 \leq \kappa \chi_0 \leq u_0^2 \leq 1 , 
$$
 for some $\kappa > 0$, and where $\chi_0$ is the characteristic function on the unit cube $C_0 := [0,1]^d$. 

\vspace{.1in}
\noindent
\textbf{Hypothesis 2 [H2].} Random variables: Let $\omega := \{ \omega_k \}_{k \in \Z^d}$ denote a family of independent, identically distributed (iid) random variables with $\omega_0 \geq 0$ with common probability density $\rho$ having compact support. 

\vspace{.1in}

For $k \in \Z^d$, we denote by $u_k$ the translate of $u_0$ by $k$, that is, $u_k(x) := u_0(x-k)$. Similarly, $C_k$ denotes the translation of $C_0$ by  $k \in \Z^d$ and we write $\chi_k$ for the characteristice function on the unit cube $C_k$. 
The random potential $V_\omega$ is defined to be
\beq\label{eq:pot1}
V_\omega (x) := \sum_{k \in \Z^d} ~ \omega_k u_k(x) .
\eeq
We work with a restricted version of the random potential in section \ref{sec:LipCont1}: 

\vspace{.1in}
\noindent
\textbf{Hypothesis 3 [H3]. } The single-site potential $u_0 =  \kappa \chi_{0}$, for some $\kappa > 0$, where $\chi_0$ is the characteristic function of the unit cube $C_0 := [0,1]^d$. The single-site probability measure is the uniform measure on the interval $[0.1]$

\vspace{.1in}
We need local operators $H_\omega^\Lambda$ obtained from $H_\omega$ by restricting to cubes $\Lambda_L := [-L, L]^d$,  for $L \in \N$, and imposing self-adjoint boundary conditions, such as Dirichlet, Neumann, or periodic boundary conditions. The unperturbed operator $H_0^\Lambda$ is associated with the nonnegative quadratic form:
\beq\label{eq:quadform1}
f \in Q (\Lambda) \rightarrow \int_{\Lambda} ~|\nabla f|^2,
\eeq
for $f$ in the appropriate form domain $Q(\Lambda)$ determined by the boundary conditions. From the construction of $V_\omega$, this potential is relatively $H_0$-bounded with relative $H_0$ bound less than one, so $H_\omega^\Lambda$ is self-adjoint on the same domain as $H_0^\Lambda$.  Furthermore, $H_\omega^\Lambda$ has a compact resolvent so the spectrum of $H_\omega^\Lambda$ is discrete. We write  $ P_\omega^\Lambda(I)$ for the spectral projector for $H_\omega^\Lambda$ and the interval $I \subset \R$.

The $\ell$DOS measure $\mu_\Lambda$ is defined as the  number of eigenvalues of $H_\omega^\Lambda$ in the interval $I  = [I_-, I_+] \subset \R$ per unit volume:
\beq\label{eq:dos-defn1}
\mu_\Lambda (I) := \frac{1}{|\Lambda|} \E \{ {\rm Tr} P_\omega^\Lambda(I) \}  .
\eeq
The density of states measure for the infinite-volume operator $H_\omega$ is obtained 
by taking $|\Lambda| \rightarrow \infty$. It exists almost surely, see, for example, \cite{kirsch}.
The Wegner estimate \cite{CHK2-2007} in this setting is the  bound
$$
\E \{ {\rm Tr} P_\omega^\Lambda(I) \} = |\Lambda| \mu_\Lambda (I) \leq C_W (I_+) | \Lambda | |I| .
$$
This bound shows that the measure $\mu_\Lambda$ is absolutely continuous with respect to Lebesgue measure. The locally bounded density of the $\ell$DOS measure is denoted $n_\Lambda (E)$.

\subsection{Contents}

In section \ref{sec:dos-poincare1}, we prove an upper bound on the trace of a spectral projector of a local Schr\"odinger operator. The upper bound is expressed in terms of the matrix elements of the spectral projector with respect to the eigenfunctions of thel Neumann Laplacian of the unit cube. The spectral averaging result is derived in section \ref{sec:spectralAve1} using analytic perturbation theory for one-parameter families of self-adjoint operators. An application is given relating the spectral shift function to the local DOS proving a form of the Birman-Solomyak formula. Finally, in section \ref{sec:LipCont1}, we prove the local Lipschitz continuity of the DOS for random Schr\"odinger operators restricted to finite domains.

\section{Trace estimates from the Poincar\'e inequality}\label{sec:dos-poincare1}
\setcounter{equation}{0}

Let $h_{0,k}$ denote the Neumann Laplacian on the unit cube $C_k \subset \R^d$ that is the translate of the unit cube $C_0 := [0,1]^d$ by $k \in \Z^d$. The $L^2$-eigenfunctions of the self-adjoint operator $h_{0,k}$ are
$\varphi_{j,k}$ with eigenvalues $E_{j,k}$, listed including multiplicity. The set of eigenvalues is $\{  ( \sum_{m=1}^d n_m^2 ) \pi^2 ~|~ (n_1, \ldots, n_d) \in \{0,1,2, \ldots\}^d \}$. The set of eigenfunctions $\{ \varphi_{j,k} \}$ forms an orthonormal basis of $L^2 ( C_k)$. The spectral representation of $h_{0,k}$ is
$$
h_{0,k} = \sum_{j=0}^\infty E_{j,k} \Pi_{\varphi_{j,k}} ,
$$
where $\Pi_{\varphi_{j,k}}$ is the projection onto the vector $\varphi_{j,k} \in L^2 (C_k)$. In general, we let $\Pi_\psi$ denote the projection onto $\psi$ in the appropriate Hilbert space. 

In the following, we denote by $\Lambda$ the cube $\Lambda := [-L,L]^d$, with $L \in \N$, and we denote by $\widetilde{\Lambda}$ the integer lattice points in $\Lambda$ so that $\widetilde{\Lambda} := \Lambda \cap \Z^d$.

\begin{theorem}\label{thm:poincare1} 
We assume [H1] and [H2]. 
Let $I = [a,b]$ with $b \leq (n+1)^2 \pi^2$, for some $n \in \N \cup \{ 0 \}$. Then, we have
\beq\label{eq:localDOS1}
{\rm Tr} P_\omega^\Lambda (I) \leq \left(  1 - \frac{b}{ (n+1)^2 \pi^2} \right)^{-1}  \sum_{k \in \tilde{\Lambda}} ~
\sum_{j=0}^n \langle \varphi_{j,k} , P_\omega^\Lambda (I) \varphi_{j,k}   \rangle  \left(     1 - \frac{j^2}{ (n+1)^2} \right).
\eeq
In particular, if $n=0$ so $b < \pi^2$, we have
\beq\label{eq:localDOS2}
{\rm Tr} P_\omega^\Lambda (I) \leq \left(  1 - \frac{b}{ \pi^2} \right)^{-1}  \sum_{k \in \tilde{\Lambda}} ~
\langle \varphi_{0,k} , P_\omega^\Lambda (I) \varphi_{0,k}   \rangle  .
\eeq
\end{theorem}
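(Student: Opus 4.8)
The plan is to test the spectral projection $Q := P_\omega^\Lambda(I)$ against the kinetic energy form and to exploit the nonnegativity of the potential. Since $V_\omega \ge 0$ by [H1]--[H2] and $Q \ge 0$, we have ${\rm Tr}\,(V_\omega Q) \ge 0$, hence ${\rm Tr}\,(H_0^\Lambda Q) \le {\rm Tr}\,(H_\omega^\Lambda Q)$. Because $Q$ is the finite-rank projection onto the eigenfunctions of $H_\omega^\Lambda$ with eigenvalues in $I=[a,b]$,
\beq
{\rm Tr}\,(H_0^\Lambda Q) \;\le\; {\rm Tr}\,(H_\omega^\Lambda Q) \;=\; \sum_{\lambda\in I}\lambda \;\le\; b\,{\rm Tr}\,Q ,
\eeq
the middle sum running over the eigenvalues of $H_\omega^\Lambda$ in $I$, counted with multiplicity. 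Fixing an orthonormal basis $\{\psi_\alpha\}$ of ${\rm Ran}\,Q$ consisting of eigenfunctions of $H_\omega^\Lambda$, each $\psi_\alpha$ lies in $\dom(H_0^\Lambda)=\dom(H_\omega^\Lambda)$, so that ${\rm Tr}\,(H_0^\Lambda Q)=\sum_\alpha \int_\Lambda |\nabla\psi_\alpha|^2$.

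The key step is to localize the gradient energy to the unit cubes and expand in the Neumann bases $\{\varphi_{j,k}\}$. The cubes $C_k$, $k\in\widetilde\Lambda$, cover $\Lambda$ up to a null set, and the restriction of $\psi_\alpha$ to $C_k$ lies in $H^1(C_k)$, the form domain of $h_{0,k}$ (valid for Dirichlet, Neumann, or periodic conditions on $\Lambda$). Therefore $\int_{C_k}|\nabla\psi_\alpha|^2=\langle\psi_\alpha, h_{0,k}\psi_\alpha\rangle_{L^2(C_k)}=\sum_j E_{j,k}|\langle\varphi_{j,k},\psi_\alpha\rangle|^2$, which is exactly the Poincar\'e (Neumann) expansion on the cube. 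Summing over $k$ and $\alpha$ and interchanging the nonnegative sums gives
\beq
{\rm Tr}\,(H_0^\Lambda Q)=\sum_{k\in\widetilde\Lambda}\sum_{j=0}^\infty E_{j,k}\,p_{j,k},\qquad p_{j,k}:=\langle\varphi_{j,k},Q\,\varphi_{j,k}\rangle\ge 0,
\eeq
where I used $\sum_\alpha|\langle\varphi_{j,k},\psi_\alpha\rangle|^2=\langle\varphi_{j,k},Q\varphi_{j,k}\rangle$. Since $\{\varphi_{j,k}\}_{j,k}$ is an orthonormal basis of $L^2(\Lambda)$, the same computation gives ${\rm Tr}\,Q=\sum_{k}\sum_j p_{j,k}$.

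Combining the two displays yields $\sum_{k,j}(E_{j,k}-b)\,p_{j,k}\le 0$. Writing $\Theta:=(n+1)^2\pi^2\ge b$ and using ${\rm Tr}\,Q=\sum_{k,j}p_{j,k}$,
\beq
(\Theta-b)\,{\rm Tr}\,Q\;\le\;\Theta\,{\rm Tr}\,Q-\sum_{k,j}E_{j,k}\,p_{j,k}\;=\;\sum_{k,j}(\Theta-E_{j,k})\,p_{j,k}\;\le\;\sum_{k}\sum_{j:\,E_{j,k}<\Theta}(\Theta-E_{j,k})\,p_{j,k},
\eeq
the final step discarding the terms with $E_{j,k}\ge\Theta$, which are nonpositive. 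Dividing by $\Theta-b$ and factoring $\Theta$ out of each retained weight produces the prefactor $(1-b/(n+1)^2\pi^2)^{-1}$ and the weight $(1-E_{j,k}/(n+1)^2\pi^2)$, the quantity displayed as $(1-j^2/(n+1)^2)$ in \eqref{eq:localDOS1} with the sum running over the modes $E_{j,k}<(n+1)^2\pi^2$. The bound \eqref{eq:localDOS2} is the case $n=0$: the first nonzero Neumann eigenvalue equals $\pi^2$ in every dimension, so only the constant ground modes $\varphi_{0,k}$ survive and the retained weight equals $1$.

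The step needing the most care is the energy identity for ${\rm Tr}\,(H_0^\Lambda Q)$: one must verify that the eigenfunctions $\psi_\alpha$ lie in the common operator domain, so that their restrictions belong to $H^1(C_k)$, and that the repeated interchange of the sums over $\alpha$, $k$, and $j$ is legitimate. Since every summand is nonnegative, Tonelli's theorem justifies all rearrangements, so no genuine analytic obstacle remains; the substance of the theorem is the Poincar\'e-plus-positivity mechanism above, with the prefactor blowing up as $b$ approaches the cutoff eigenvalue $(n+1)^2\pi^2$.
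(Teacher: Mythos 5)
Your proof is correct, and it reaches \eqref{eq:localDOS1} by a genuinely different organization of the same underlying mechanism. The paper proves a per-cube, per-eigenfunction estimate (Lemma \ref{lemma:lowerbdcube1}): it applies the Poincar\'e-type inequality \eqref{eq:spPoincare1} to $\chi_k\psi_E$ minus its low Neumann modes, then converts $\int_{C_k}|\nabla\psi_E|^2$ into $(E-\kappa\omega_k)\int_{C_k}|\psi_E|^2$ by integrating by parts \emph{on each unit cube}, which produces boundary terms $B_k(\psi_E)$ whose cancellation $\sum_k B_k(\psi_E)=0$ must then be verified from the self-adjoint boundary conditions on $\partial\Lambda$. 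You instead work globally with the trace inequality ${\rm Tr}(H_0^\Lambda Q)\le b\,{\rm Tr}\,Q$ and decompose the Dirichlet form over the cubes via the Neumann form identity $\int_{C_k}|\nabla f|^2=\sum_j E_{j,k}|\langle\varphi_{j,k},f\rangle|^2$ for $f\in H^1(C_k)$; since you never integrate by parts on an individual cube, no boundary terms arise and the cancellation lemma is unnecessary. The final linear rearrangement $\sum_{k,j}(E_{j,k}-b)p_{j,k}\le 0$ replaces the paper's summation of per-cube bounds. What the paper's route buys is the sharper local estimate \eqref{eq:L2bound1} retaining the $\kappa\omega_k$ improvement on each cube (dropped in the theorem itself, and not needed elsewhere for Theorem \ref{thm:poincare1}); what your route buys is brevity and the absence of the boundary-term bookkeeping. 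Two minor points: the intermediate expression $\langle\psi_\alpha,h_{0,k}\psi_\alpha\rangle_{L^2(C_k)}$ is an abuse of notation since $\psi_\alpha|_{C_k}$ is only in the form domain, not the operator domain, of $h_{0,k}$ --- but your subsequent use of the form identity is the correct statement; and your derivation naturally yields the weight $(1-E_{j,k}/(n+1)^2\pi^2)$ summed over modes with $E_{j,k}<(n+1)^2\pi^2$, which matches the displayed weight $(1-j^2/(n+1)^2)$ under the paper's implicit identification $E_{j,k}=(j\pi)^2$ (exact for $d=1$); for $d\ge 2$ your form is in fact the cleaner general statement, and the $n=0$ case \eqref{eq:localDOS2} is unaffected since the only Neumann eigenvalue below $\pi^2$ is $0$.
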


\begin{proof}
Let $\{ \psi_{j} \}$ be an orthonormal basis of eigenfunction of $H_\omega^\Lambda$ with corresponding eigenvalues $E_j$. Although the eigenvalues are random variables, the randomness does not play a role in Theorem \ref{thm:poincare1}. We begin by expanding the  trace with respect to the orthonormal basis of eigenfunctions $\{ \psi_j \}$ and use the decomposition $\chi_{\Lambda} = \sum_{k \in \widetilde{\Lambda}} \chi_k$ of the identity on $\Lambda$  giving
\beq\label{eq:decomp11}
{\rm Tr} P_\omega^\Lambda (I) = \sum_{\{j: E_j \in I\}} \sum_{k \in \widetilde{\Lambda}} {\rm Tr} \Pi_{\psi_j} \chi_k  = \sum_{\{j: E_j \in I\}} \sum_{k \in \widetilde{\Lambda}} ~\int_{C_k} ~|\psi_j(x)|^2 ~dx.
\eeq
Assuming Lemma \ref{lemma:lowerbdcube1}, the proof now easily follows by summation over eigenvalues and over lattice points $k \in \tilde{\Lambda}$. The self-adjoint boundary conditions of $H_\omega^\Lambda$ guarantee that the sum $\sum_{k \in \widetilde{\Lambda}} B_k(\psi_E) = 0$, where the boundary term associated with $C_k$, $B_k(\psi_E)$, is defined in \eqref{eq:L2bound1}.
\end{proof}

We now turn to Lemma \ref{lemma:lowerbdcube1} and its proof that is based on a Poincar\'e-type inequality \eqref{eq:spPoincare1}.

\begin{lemma}\label{lemma:lowerbdcube1}
We assume [H1] and [H2].  Let $\psi_E$ be a normalized eigenfunction of $H_\omega^\Lambda$ with eigenvalue $E \in [0, (n+1)^2 \pi^2]$, for some $n \in \N$. Then, for all $k \in \tilde{\Lambda}$, we have
\bea\label{eq:L2bound1}
 \int_{C_k} |\psi_E (x)|^2 ~ dx  & \leq &  \left( 1 + \frac{ \kappa \omega_k- E}{(n+1)^2 \pi^2} \right)^{-1}  \nonumber \\
 &  \times &  \left[ \sum_{j=0}^n | \langle \psi_E, \varphi_{j,k} \rangle|^2 \left( 1 - \frac{j^2}{(n+1)^2} \right) + \frac{B_k(\psi_E)}{(n+1)^2 \pi^2} \right] ,
\eea
where $\kappa > 0$ is the constant in [H1], and the boundary terms $B_k (\psi_E)$ given by 
\beq\label{eq:bk-defn1}
B_k (\psi_E) := \int_{\partial C_k} \psi_E(x) \hat{\nu} \cdot \nabla \psi_E(x), 
\eeq
satisfy
\beq\label{eq:sum1}
\sum_{k \in \widetilde{\Lambda}} B_k(\psi_E) = 0.
\eeq
\end{lemma}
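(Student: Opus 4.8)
The plan is to localize the eigenvalue equation $H_\omega^\Lambda\psi_E = E\psi_E$ to a single cube $C_k$, expand $\psi_E|_{C_k}$ in the Neumann basis $\{\varphi_{j,k}\}$, and feed the resulting energy identity into a spectral-cutoff (Poincar\'e) argument. Write $M := (n+1)^2\pi^2$, $c_j := \langle\psi_E,\varphi_{j,k}\rangle_{L^2(C_k)}$, and $S := \int_{C_k}|\psi_E|^2$; we may take $\psi_E$ real. Since $\{\varphi_{j,k}\}$ is an orthonormal basis of $L^2(C_k)$ and $H^1(C_k)$ is the form domain of $h_{0,k}$ with quadratic form $\int_{C_k}|\nabla\,\cdot\,|^2$, the spectral representation of the form gives simultaneously $S = \sum_{j\ge0}|c_j|^2$ and $\int_{C_k}|\nabla\psi_E|^2 = \sum_{j\ge0}E_{j,k}|c_j|^2$.

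First I would produce a second expression for the same Dirichlet energy from the equation itself. Green's first identity on $C_k$ together with $-\Delta\psi_E = (E - V_\omega)\psi_E$ gives
$$\int_{C_k}|\nabla\psi_E|^2 = E\,S - \int_{C_k}V_\omega|\psi_E|^2 + B_k(\psi_E),$$
with $B_k(\psi_E)$ exactly the boundary integral \eqref{eq:bk-defn1}. By [H1] one has $V_\omega \ge \omega_k u_k^2 \ge \kappa\omega_k$ on $C_k$, and $\omega_k \ge 0$ by [H2], so $\int_{C_k}V_\omega|\psi_E|^2 \ge \kappa\omega_k S$. Equating the two expressions for the Dirichlet energy then yields the master inequality
$$\sum_{j\ge0}E_{j,k}|c_j|^2 \le (E - \kappa\omega_k)\,S + B_k(\psi_E).$$

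Then I would carry out the Poincar\'e step, splitting the $j$-sum at the gap $M$. Every high mode satisfies $E_{j,k}\ge M$, hence $|c_j|^2 \le M^{-1}E_{j,k}|c_j|^2$; summing these and using the master inequality to control $\sum_{\mathrm{high}}E_{j,k}|c_j|^2$, then moving the $S$-term to the left, gives
$$S\left(1 + \frac{\kappa\omega_k - E}{M}\right) \le \sum_{E_{j,k} < M}|c_j|^2\left(1 - \frac{E_{j,k}}{M}\right) + \frac{B_k(\psi_E)}{M}.$$
Because $E \le (n+1)^2\pi^2 = M$ and $\kappa\omega_k \ge 0$, the prefactor on the left is nonnegative, and dividing by it produces \eqref{eq:L2bound1}; the low modes $E_{j,k} < M$ are labelled $j = 0,\dots,n$, and since $E_{j,k}=j^2\pi^2$ one has $E_{j,k}/M = j^2/(n+1)^2$, which turns the weights into $1 - j^2/(n+1)^2$.

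Finally, for \eqref{eq:sum1} I would sum $B_k(\psi_E)$ over $k\in\widetilde\Lambda$. As the unit cubes $\{C_k\}_{k\in\widetilde\Lambda}$ tile $\Lambda$ and the traces of $\psi_E$ and $\nabla\psi_E$ match across every interior face, the two contributions to a shared face appear with opposite outward normals and cancel; the surviving terms come only from $\partial\Lambda$ and vanish under the self-adjoint (Dirichlet, Neumann, or periodic) boundary conditions on $H_\omega^\Lambda$. I expect the main obstacle to be the careful handling of this boundary term: one must retain $B_k$ through the per-cube estimate, justify the integration by parts via the $H^2$-regularity and face-traces of $\psi_E$, and verify the global cancellation, since it is precisely the self-adjointness of the boundary conditions that forces $\sum_k B_k(\psi_E)=0$ and lets the per-cube bounds sum to the trace estimate of Theorem \ref{thm:poincare1}.
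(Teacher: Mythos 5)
Your proposal is correct and follows essentially the same route as the paper: your ``master inequality'' $\sum_j E_{j,k}|c_j|^2 \le (E-\kappa\omega_k)S + B_k(\psi_E)$ combined with the mode-splitting at the spectral gap $(n+1)^2\pi^2$ is exactly the paper's combination of the Poincar\'e-type inequality for the high-mode projection $P_n\psi_E$ with the integration by parts in \eqref{eq:decomp3}, and your face-cancellation argument for $\sum_k B_k(\psi_E)=0$ is the same use of the self-adjoint boundary conditions that the paper packages through the quadratic-form identity \eqref{eq:sum11}.
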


\begin{proof}
1. Working with $k=0$ for simplicity, we define the projector $P_n$ by
\beq
P_n\psi_E := \chi_0 \psi_E - \sum_{j=0}^n \langle \varphi_{j,0}, \psi_E \rangle \psi_{j,0}.
\eeq
The vector $P_n \psi_E$ is the projection of $\chi_0 \psi_E$ onto the spectral subspace of $h_{0,0}$ spanned by eigenstates of $h_{0,0}$
with energy at least $(n+1)^2 \pi^2$.
As a consequence, we have the Poincar\'e-type inequality for $P_n \psi_E$:
\beq\label{eq:spPoincare1}
\int_{C_0} | P_n \psi_E|^2 ~dx \leq \frac{1}{(n+1)^2 \pi^2} \int_{C_0} |\nabla P_n \psi_E |^2 ~dx .
\eeq
This inequality follows from the expansion of $P_n \psi_E$ in the orthonormal basis $\{ \varphi_{j,0} \}$ of eigenfunctions of $h_{0,0}$ and noting that $\int_{C_0} | \nabla \varphi_{j,0}|^2 = E_{j,0}$. Consequently, we obtain
$$
  \int_{C_0} |\nabla P_n \psi_E |^2 ~dx  =   \sum_{j=n+1}^\infty E_{j,0} | \langle \psi_E, \varphi_{j,0} \rangle |^2  \geq (n+1)^2 \pi^2 
\int_{C_0} | P_n \psi_E|^2,
$$
from which \eqref{eq:spPoincare1} follows.


\noindent
2. Decomposing $\chi_0 \psi_E$ with respect to the basis $\varphi_{j,0}$, and using the Poincar\'e-type inequality \eqref{eq:spPoincare1}, we have
\bea\label{eq:decomp1}
\int_{C_0} | \psi_E(x)|^2 ~dx & = & \int_{C_0} | P_n \psi_E(x)|^2 ~dx + \sum_{j=0}^n | \langle \psi_E, \varphi_{j,0} \rangle |^2  \nonumber \\
 & \leq & \frac{1}{(n+1)^2 \pi^2} \int_{C_0}   | \nabla  P_n \psi_E(x)|^2 ~dx + \sum_{j=0}^n | \langle \psi_E, \varphi_{j,0} \rangle |^2 ,
\eea
and
\beq\label{eq:decomp2}
 \int_{C_0}   | \nabla  P_n \psi_E(x)|^2 ~dx  =  \int_{C_0}   | \nabla  \psi_E(x)|^2 ~dx -   \sum_{j=0}^n  (j \pi )^2 | \langle \psi_E, \varphi_{j,0} \rangle |^2 .
\eeq

\noindent
3. Finally, from the assumptions on $H_0^\Lambda$, integration by parts results in
\bea\label{eq:decomp3}
\int_{C_0} | \nabla \psi_E(x)|^2 ~dx & = & - \int_{C_0} (\Delta \psi_E(x)) \psi_E(x) ~dx + B_0( \psi_E ) \nonumber \\
  & \leq & (E - \kappa \omega_0) \int_{C_0} |\psi_E(x)|^2 ~dx  + B_0( \psi_E ) ,
\eea
where the boundary term $B_0$ is
\beq\label{eq:boundary1}
B_0(\psi_E) := \int_{C_0} ~\nabla \cdot (\psi_E(x) \nabla \psi_E(x) ~dx  = \int_{\partial C_0} ~ \psi_E(x) ~\hat{\nu} \cdot \nabla \psi_E(x) ~dS(x),
\eeq
and $dS$ denotes the surface measure.
Using expression \eqref{eq:decomp3} in \eqref{eq:decomp2} we obtain,
\beq\label{eq:decomp4}
 \int_{C_0}   | \nabla  P_n \psi_E(x)|^2 ~dx  \leq   (E - \kappa \omega_0) \int_{C_0} |\psi_E(x)|^2 ~dx  + B_0( \psi_E ) -   \sum_{j=0}^n  (j \pi )^2 | \langle \psi_E, \varphi_{j,0} \rangle |^2 .
\eeq
Substituting the right side of \eqref{eq:decomp4} into the right side of \eqref{eq:decomp1} yields the result \eqref{eq:L2bound1} for $k=0$.

\noindent
4. To verify the second result \eqref{eq:sum1}, we note that the equality on the first line of \eqref{eq:decomp3} holds for any $k \in \widetilde{\Lambda}$ replacing $0$:
\beq\label{eq:decomp5}
\int_{C_k} | \nabla \psi_E(x)|^2 ~dx  =  - \int_{C_k} (\Delta \psi_E(x)) \psi_E(x) ~dx + B_k( \psi_E ) .
\eeq
with $B_k(\psi_E)$ defined as in \eqref{eq:boundary1} with $k \in \widetilde{\Lambda}$ replacing $0$. We note that $\Lambda = {\rm Int} ~\overline{\cup_{k \in \widetilde{\Lambda}} ~C_k}$. Because of the self-adjoint boundary conditions, the quadratic form associated with $H_0^\Lambda$ in \eqref{eq:quadform1} satisfies:
\bea\label{eq:sum11}
Q_\Lambda ( \psi_E)  & =  & \int_{\Lambda} |\nabla \psi_E (x)|^2 ~dx \nonumber \\
  & = &  \sum_{k \in \widetilde{\Lambda}} ~  \int_{C_k} |\nabla \psi_E (x)|^2 ~dx \nonumber \\
 & = &   \sum_{k \in \widetilde{\Lambda}} ~  \int_{C_k}   (- \Delta \psi_E )(x) \psi_E(x) ~dx ,
\eea
so comparing \eqref{eq:sum11} with the sum of \eqref{eq:decomp5},  we have
\beq\label{eq:sum2}
\sum_{k \in \widetilde{\Lambda}} B_k(\psi_E) = 0,
\eeq
verifying \eqref{eq:sum1}.
\end{proof}


\section{An alternate approach to spectral averaging}\label{sec:spectralAve1}
\setcounter{equation}{0}

In this section, we present an alternate approach to spectral averaging based on analytic perturbation theory, and use it to prove a version of the Birman-Solomyak Theorem connecting the DOS with the spectral shift function.
We consider a one-parameter family of self-adjoint operators $H_\omega := H_0 + \omega u^2$ on a separable Hilbert space $\mathcal{H}$.  We assume that the self-adjoint operator $H_0$ has discrete spectrum, at least locally in a bounded interval $I \subset \R$.
The perturbation $u^2$ is a bounded, nonnegative, self-adjoint operator with $\| u^2 \| \leq 1$, and the variable $\omega \in \R$.

\begin{theorem}\label{thm:SpecAveraging1}
Let $I \subset \R$ be a bounded interval and $P_\omega (I)$ be the spectral projector for $I$ and $H_\omega$. Let $\varphi \in \mathcal{H}$ be a normalized vector so $\| \varphi \| = 1$. For any $\tau_1 < \tau_2$,
we have
\beq\label{eq:SpAve1}
\int_{\tau_1}^{\tau_2} \langle \varphi, u P_\omega (I) u  \varphi \rangle ~d \omega \leq |I| \|  \varphi \|_{L^2(\supp u)}^2.
\eeq
\end{theorem}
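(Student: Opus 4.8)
The plan is to diagonalise the real-analytic self-adjoint family $H_\omega = H_0 + \omega u^2$ and to convert the $\omega$-average into an integral over the energy along the eigenvalue branches, at which point a Birman--Schwinger identity supplies exactly the orthogonality needed to close the estimate by Bessel's inequality. First I would invoke Rellich's theorem from analytic perturbation theory: since $u^2$ is bounded, $\omega \mapsto H_\omega$ is an analytic family of type (A), so the eigenvalues $E_n(\omega)$ and an associated orthonormal system of eigenvectors $\phi_n(\omega)$ lying in the interval of discrete spectrum may be chosen real-analytic in $\omega$. The Feynman--Hellmann formula then gives $E_n'(\omega) = \langle \phi_n(\omega), u^2 \phi_n(\omega)\rangle = \| u\phi_n(\omega)\|^2 \ge 0$, so every branch is nondecreasing, and expanding the projector in this basis yields
\beq
\langle \varphi, u P_\omega(I) u \varphi \rangle = \sum_{n : E_n(\omega) \in I} |\langle \varphi, u \phi_n(\omega)\rangle|^2 .
\eeq

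Next I would integrate in $\omega$, interchange sum and integral by Tonelli, and on each branch change variables $\omega \mapsto E = E_n(\omega)$, using $d\omega = dE / \| u\phi_n\|^2$ on the set where the branch is strictly increasing. This turns the average into
\beq
\int_{\tau_1}^{\tau_2} \langle \varphi, u P_\omega(I) u \varphi \rangle \, d\omega = \int_I \Bigl( \sum_{n \in \mathcal{C}(E)} \frac{|\langle \varphi, u \phi_n \rangle|^2}{\| u \phi_n\|^2} \Bigr) dE ,
\eeq
where $\mathcal{C}(E)$ collects the branches passing through energy $E$ during the sweep, each evaluated at its crossing parameter $\omega_{n,E}$. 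Writing $v_n := u\phi_n / \|u\phi_n\|$, which is supported in $\supp u$, the inner sum is $\sum_{n \in \mathcal{C}(E)} |\langle \varphi, v_n\rangle|^2$.

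The crucial step is to show that $\{ v_n \}_{n \in \mathcal{C}(E)}$ is orthonormal for a.e.\ $E$. For $E \notin \spec(H_0)$ (a set of full measure in $I$, since $H_0$ has discrete spectrum there), the eigenvalue equation $(H_0 - E)\phi_n = -\omega_{n,E}\, u^2 \phi_n$ rearranges, upon applying $u(H_0-E)^{-1}$, into $K(E)(u\phi_n) = -\omega_{n,E}^{-1}(u\phi_n)$, where $K(E) := u(H_0 - E)^{-1} u$ is bounded and self-adjoint. Distinct branches cross $E$ at distinct parameters $\omega_{n,E}$, hence at distinct eigenvalues of $K(E)$, so the vectors $u\phi_n$ are mutually orthogonal; a degenerate crossing is handled by choosing the eigenbasis of $u^2$ within the $E$-eigenspace of $H_{\omega_{n,E}}$. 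Bessel's inequality in $L^2(\supp u)$ then gives $\sum_{n \in \mathcal{C}(E)} |\langle \varphi, v_n \rangle|^2 \le \| \varphi\|_{L^2(\supp u)}^2$, and integrating over $E \in I$ produces the bound $|I| \, \|\varphi\|_{L^2(\supp u)}^2$.

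The main obstacle is precisely this orthogonality: the eigenvectors $\phi_n$ contributing at a fixed energy $E$ belong to different operators $H_{\omega_{n,E}}$ and are in no way orthogonal as vectors of $\mathcal{H}$, so the naive per-branch Cauchy--Schwarz bound $|\langle \varphi, v_n\rangle|^2 \le \|\varphi\|^2$ would lose a factor equal to the number of crossings. The Birman--Schwinger reformulation, transferring the spectral problem onto the single self-adjoint operator $K(E)$, is exactly what recovers orthogonality and makes Bessel applicable. A secondary technical point is to justify the branchwise change of variables at the measure-zero energies in $\spec(H_0)$ and at degenerate crossings, which the analytic structure of the branches controls.
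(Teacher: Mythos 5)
Your argument is correct and follows essentially the same route as the paper: analytic (type A) perturbation theory, the Feynman--Hellmann identity giving monotone branches, the change of variables $\omega\mapsto E_j(\omega)$, and orthogonality at fixed energy followed by Bessel's inequality. The only notable difference is that you justify the key orthogonality of the vectors $u\phi_n$ at a fixed crossing energy explicitly via the Birman--Schwinger operator $K(E)=u(H_0-E)^{-1}u$, whereas the paper asserts it is ``easy to check'' (deferring degenerate crossings to Kato's reduction process) and only invokes the Birman--Schwinger kernel later, in the proof of the equality case; your version of that step is arguably the more complete one.
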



\begin{proof}
1. The family $H_\omega$ is a type A analytic family of operators. From the standard results on analytic perturbation theory (see, for example, \cite[chapter VII, section 2]{kato}), there are analytic eigenvalues $E_j(\omega) \in I$, corresponding eigenfunctions $\psi_j(\omega)$,   with $\| \psi_j(\omega)\| = 1$, and rank-one
 eigenprojections $P_j(\omega) = \Pi_{\psi_j(\omega)}$, such that
$$
P_\omega (I) = \sum_{\{ j ~:~ E_j(\omega) \in I\}} P_j(\omega) ,
$$
where the sum over the eigenvalues includes multiplicities. Substituting this into the left side of \eqref{eq:SpAve1}, we obtain
\beq\label{eq:SpAve2}
\int_{\tau_1}^{\tau_2} \langle \varphi, u P_\omega (I) u  \varphi \rangle ~d \omega =   \int_{\tau_1}^{\tau_2} \left[ \sum_{\{ j ~:~ E_j(\omega) \in I\}} ~ \langle \varphi, u P_j (\omega) u  \varphi \rangle \right] ~d \omega .
\eeq

\noindent
2. Concerning the projectors $P_j(\omega)$, an application of the Feynman-Hellman Theorem implies that
\beq\label{eq:monotone1}
P_j (\omega) u^2  P_j(\omega) = E_j'(\omega) P_j(\omega).
\eeq
If we let $A_j := P_j(\omega) u$, we form two self-adjoint, rank-one operators: $A_j A_j^\star = P_j (\omega) u^2  P_j(\omega)$, and $A_j^\star A_j = u P_j (\omega) u$. The operator $A_j^\star A_j$ projects onto $u \psi_j(\omega)$, whereas the operator $A_j A_j^\star$ projects onto
 $\psi_j(\omega)$. We assume that $u \psi_j(\omega) \neq 0$. This follows for local Schr\"odinger operators, for example,  by the unique continuation principle. Since $A_j A_j^\star$ and $A_j^\star A_j$ are self-adjoint and have the same eigenvalues (except possibly $0$), the spectral theorem gives
\beq\label{eq:diagonal1}
A_j^\star A_j = u P_j (\omega) \nu = E_j^\prime(\omega) \widetilde{P}_j(\omega),
\eeq
where $\widetilde{P}_j(\omega)$ projects onto $u \psi_j(\omega)$. 

\noindent
3. The positivity of the  left side of \eqref{eq:monotone1} implies that $E_j(\omega)$ is monotone increasing. As a consequence, given $E \in I = [a,b]$, let $\omega_j(E) \in [\tau_1, \tau_2]$  be such that $E_j(\omega_j(E)) = E$, whenever such an $\omega_j(E)$ exists.  We perform a change of variables from $\omega \in [\tau_1, \tau_2] \rightarrow E \in I$. With this change of variables and \eqref{eq:diagonal1},  an arbitary term of the sum on the right side of \eqref{eq:SpAve2} becomes
\bea\label{eq:SpAve3}
\int_{\tau_1}^{\tau_2} \langle \varphi, u P_j(\omega) u \varphi \rangle ~d \omega &=&   \int_{\tau_1}^{\tau_2} \| \widetilde{P}_j(\omega) \varphi \|^2 ~E_j^\prime(\omega) ~ d \omega \nonumber \\
 &=&   \int_{\sup \{ a, E_j(\tau_1) \}}^{\inf \{ b, E_j(\tau_2) \} } ~ \| \widetilde{P}_j(\omega_j(E)) \varphi \|^2 ~d E . \nonumber \\
 & &
\eea

\noindent
4. With respect to the projectors $\widetilde{P}_j(\omega)$, it is easy to check that if $\omega_j(E) \neq \omega_{j^\prime} (E)$, then
\beq\label{eq:orthog1}
\widetilde{P}_j (\omega_j (E)) \widetilde{P}_{j^\prime} (\omega_{j^\prime}(E)) = \delta_{j j^\prime} \widetilde{ P}_j (\omega_j (E))
\eeq
This also holds if $\omega_j(E) = \omega_{j^\prime}(E)$ by construction of the $\widetilde{P}_j(\omega)$ by the reduction process as described in \cite[chapter II, section 2.3]{kato}. 
Let us define $f_j(E)$ by
\beq\label{eq:fj1}
f_j(E) := \| \widetilde{P}_j (\omega_j(E)) \varphi \|^2.
\eeq
From \eqref{eq:diagonal1} and \eqref{eq:SpAve3}, it follows that
\beq\label{eq:spAveUB1}
\int_{\tau_1}^{\tau_2}  \langle \varphi, u P_\omega (I) u \varphi \rangle ~ d \omega \leq \int_a^b ~\left[ \sum_{ \{ j ~|~ \omega_j (E) \in [\tau_1, \tau_2] \} } ~ f_j(E) \right] ~dE .
\eeq
According to the orthogonality condition \eqref{eq:orthog1}, we have $ \sum_j f_j(E)  \leq   \|  \varphi \|_{L^2(\supp u)}^2$, for all $E \in I = [a,b]$. This bound, together with \eqref{eq:spAveUB1},  proves the result.
\end{proof}


There is a situation where we can have equality in Theorem \ref{sec:spectralAve1}. This is when the interval $[\tau_1, \tau_2]$ is equal to the real line $\R$. The proof of this requires some basic tools from Birman-Schwinger theory developed, for example, in \cite[Appendix B]{aenss}. These operators require that for all $\omega$ the operators $H_\omega$ are \emph{local} in the sense that if $H_\omega \varphi = 0$ on any open set in $\R^d$, then $\varphi = 0$ on that set. The Schr\"odinger operators considered here are local in this sense.

\begin{corollary}\label{cor:SpAveEquality1}
Assume that $H_\omega = H_0 + \omega u^2$ is a local operator, in the sense above, for all $\omega \in \R$. Assume that $I \subset \R$ is an interval for which $\sigma(H_0) \cap I$ has zero Lebesgue measure (for example, $\sigma (H_0) \cap I$ is discrete). We then have
\beq\label{eq:SpAveEquality1}
\int_{\R}  \langle \varphi, u P_\omega (I) u \varphi \rangle ~ d \omega = |I| \| \varphi \|_{L^2 ( {\rm supp} ~u)}.
\eeq
\end{corollary}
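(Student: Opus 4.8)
The plan is to observe that the only inequality in the proof of Theorem~\ref{thm:SpecAveraging1} is the final bound $\sum_j f_j(E)\le\|\varphi\|_{L^2(\supp u)}^2$, which comes from the orthogonality \eqref{eq:orthog1}; the change of variables \eqref{eq:SpAve3} is an exact identity. Taking $[\tau_1,\tau_2]=\R$ makes the restriction $\omega_j(E)\in[\tau_1,\tau_2]$ vacuous, so that for each $E\in I$ the sum now runs over \emph{every} analytic branch $E_j$ whose range contains $E$. Since all integrands are nonnegative, monotone convergence applied to \eqref{eq:spAveUB1} yields the identity
\[
\int_{\R}\langle\varphi,uP_\omega(I)u\varphi\rangle\,d\omega=\int_I\Big[\sum_{\{j\,:\,E\in\operatorname{Range}(E_j)\}}f_j(E)\Big]\,dE,
\]
with $f_j(E)=\|\widetilde P_j(\omega_j(E))\varphi\|^2$ as in \eqref{eq:fj1}. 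It then remains to show that the bracketed sum equals $\|\varphi\|_{L^2(\supp u)}^2$ for almost every $E\in I$, and since $\sigma(H_0)\cap I$ is Lebesgue-null we may restrict to $E\notin\sigma(H_0)$.

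Fix such an $E$. I would identify the branches through $E$ via the Birman--Schwinger correspondence: for $E\notin\sigma(H_0)$, the relation $H_\omega\psi=E\psi$ is equivalent, after applying $u$ to $\psi=-\omega(H_0-E)^{-1}u(u\psi)$, to $K(E)(u\psi)=-\tfrac1\omega(u\psi)$, where $K(E):=u(H_0-E)^{-1}u$ is bounded and self-adjoint. Thus the branches passing through $E$ are in bijection, with multiplicity, with the nonzero eigenvalues of $K(E)$, the eigenvector attached to branch $j$ being $u\psi_j(\omega_j(E))$ (nonzero by locality, exactly as in the proof of Theorem~\ref{thm:SpecAveraging1}). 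Consequently the rank-one projectors $\widetilde P_j(\omega_j(E))$ of \eqref{eq:diagonal1} are precisely the eigenprojections of $K(E)$ onto its nonzero eigenvalues, and by \eqref{eq:orthog1} their sum is the spectral projection $\Pi_{(\ker K(E))^{\perp}}$ of $K(E)$. Hence $\sum_j f_j(E)=\big\|\,\Pi_{(\ker K(E))^{\perp}}\varphi\,\big\|^2$, and the whole problem reduces to proving $\ker K(E)=\ker u$, equivalently $(\ker K(E))^{\perp}=\overline{\operatorname{Ran} u}=L^2(\supp u)$.

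This last step is where I expect the real work, and it is exactly the point at which the locality of $H_\omega$ is indispensable (the Birman--Schwinger tools of \cite[Appendix~B]{aenss}). The inclusion $\ker u\subseteq\ker K(E)$ is automatic. For the converse, take $g\in\ker K(E)$ with $g\in\overline{\operatorname{Ran} u}$ and set $w:=(H_0-E)^{-1}ug$, so that $(H_0-E)w=ug$ while $uw=0$. The equation $uw=0$ forces $w=0$ almost everywhere on $\{u>0\}$; on any open set where $w$ vanishes all of its derivatives vanish too, so $(H_0-E)w=0$ there and therefore $ug=0$ there, while $ug=0$ off $\supp u$ trivially. For a potential with nice positivity set (e.g.\ $u=\kappa\chi_{C_0}$ as in [H3]) this already gives $ug=0$ a.e.; in general, closing the gap on the boundary of $\{u>0\}$ is precisely what the unique continuation / locality property supplies. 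Either way one concludes $ug=0$, hence $g\in\ker u\cap\overline{\operatorname{Ran} u}=\{0\}$.

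With $(\ker K(E))^{\perp}=L^2(\supp u)$ established, $\Pi_{(\ker K(E))^{\perp}}$ is multiplication by $\chi_{\supp u}$, so $\sum_j f_j(E)=\|\chi_{\supp u}\varphi\|^2=\|\varphi\|_{L^2(\supp u)}^2$ for a.e.\ $E\in I$; integrating over $I$ gives \eqref{eq:SpAveEquality1} (with the norm squared, matching the right-hand side of \eqref{eq:SpAve1}). I would emphasize in the write-up that completeness of the Birman--Schwinger eigensystem in $\overline{\operatorname{Ran} u}$ genuinely requires locality: without it one only recovers the inequality of Theorem~\ref{thm:SpecAveraging1}, because for finite $[\tau_1,\tau_2]$ one necessarily omits the branches meeting $E$ at parameter values outside $[\tau_1,\tau_2]$.
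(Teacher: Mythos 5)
Your proposal is correct and follows essentially the same route as the paper: reduce to showing $\sum_j f_j(E)=\|\varphi\|^2_{L^2(\supp u)}$ for a.e.\ $E\in I$, identify the branches $\omega_j(E)$ with the spectrum of the Birman--Schwinger operator $K(E)=u(H_0-E)^{-1}u$, and invoke completeness of the eigenprojectors $\widetilde P_j(\omega_j(E))$ on $L^2(\supp u)$. The only difference is that you work out the kernel argument $\ker K(E)=\ker u$ explicitly where the paper simply cites Lemma~B.2 of \cite{aenss}, and you correctly note that the right-hand side of \eqref{eq:SpAveEquality1} should carry the square on the norm.
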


We assume that $E\not\in \sigma(H_\omega)$ and define the Birman-Schwinger kernel by $K_0(E) := u (H_0 - E)^{-1} u$. According to Lemma B.2 of \cite{aenss} the set of $\omega_j(E)$ in \eqref{eq:fj1} are the repeated eigenvalues of $-K_0(E)^{-1}$ considered as a self-adjoint operator on
$L^2 ( {\rm supp} ~u)$. Moreover, the projectors $\widetilde{P}_j(\omega_j(E))$ in \eqref{eq:diagonal1} are a complete set of eigenprojectors for
$K_0(E)^{-1}$. It follows that
$$
\| \varphi \|_{L^2 ( {\rm supp} ~u)}^2 = \sum_j \| \widetilde{P}_j (\omega_j(E)) \varphi \|^2.
$$
Since this holds for almost every $E \in I$, the result follows from \eqref{eq:SpAve3}.

Turning to the spectral shift function, from \eqref{eq:spAveUB1}, we recover some known results about the connection between the spectral shift function (SSF) for the pair $(H_0, H_\omega)$ and the local density of states as first proven in \cite{CHK2007, simon1998}.
For any $\varphi \in \mathcal{H}$, we define $\eta_\varphi (E)$ to be
\bea\label{eq:eta1}
\eta_\varphi (E) & := & \lim_{\epsilon \rightarrow 0^+} \frac{1}{\epsilon}  \int_{\tau_1}^{\tau_2}  \langle \varphi, u P_\omega ([E, E+ \epsilon]) u \varphi \rangle ~ d \omega \nonumber \\
 & = &  \lim_{\epsilon \rightarrow 0^+} \frac{1}{\epsilon} \sum_j \int_{\sup \{ E, E_j(\tau_1) \}}^{\inf \{ E+\epsilon, E_j(\tau_2) \} } ~f_j(s ) ~ds  \nonumber \\
 & = & \sum_{j \in \Gamma(E)} ~f_j(E)  ,
\eea
where $f_j(E)$ is defined in \eqref{eq:fj1} and $\Gamma (E)$ is the set of indices defined as follows:
\bea\label{Eq"Gamma1}
j \in \Gamma (E)  & \Leftrightarrow & \exists \omega_j \in [\tau_1, \tau_2] ~{\rm s.\ t.} ~  E_j(\omega_j) = E \nonumber \\
  & \Leftrightarrow & \omega_j(E) \in [{\tau_1}, {\tau_2} ] ,
\eea
so that  
\beq\label{eq:connectSSF1}
{\rm card} ~\Gamma (E) = \xi (E; H_{\tau_2}, H_{\tau_1} ) .
\eeq
That is, the integer ${\rm card} ~\Gamma (E)$  is the number of eigenvalues of $H_\omega$ crossing $E$ as $\omega$ runs from $\tau_1$ to $\tau_2$.

Let $\{ \varphi_n \}_n$ be an orthonormal basis of $\mathcal{H}$ and take $\varphi = \varphi_k$ to be any element. Then, summing  the right side of \eqref{eq:fj1} over this basis and, using the fact that ${\rm Tr} ( \widetilde{P}_j ( \omega_j(E))) = 1$,
the following form of the { \bf{ Birman-Solomyak formula}} now follows from \eqref{eq:eta1} and \eqref{eq:connectSSF1}:\beq\label{eq:ssf2}
 \xi (E; H_{\tau_2}, H_{\tau_1} ) =  \lim_{\epsilon \rightarrow 0^+} \frac{1}{\epsilon} \int_{\tau_1}^{\tau_2} ~ {\rm Tr}( u  P_\omega ([E, E+\epsilon]) u ) ~d \omega .
\eeq

We note that \eqref{eq:ssf2} is a version of the Birman-Solomyak formula established solely by analytic perturbation theory. A similar formula was derived by Simon using the Krein trace formula for resolvents \cite[equation (1)]{simon1998}. A more common version of this formula is
$$
\int_{\tau_1}^{\tau_2}  ~ {\rm Tr} ( u P_\omega (I)u ) ~d \omega =\int_I ~ \xi (E; H_{\tau_2}, H_{\tau_1}) ~dE,
$$
as found, for example, in \cite{CHK2007}.

%

Formula  \eqref{eq:ssf2} applies to the spectral shift function for local Schr\"odinger operators with discrete spectrum discussed here. We consider a one-parameter family of Schr\"odinger operators $H_\omega := H_0 + \omega u^2$ on $L^2 (\Lambda)$, with $u \geq 0$ satisfying $u \in L^\infty_0(\R^d)$ and  for a parameter $\omega \in \R$. The self-adjoint operator $H_0 $ is given by $H_0 = - \Delta + \sum_{j \in \Z^d \backslash \{ 0 \}} \omega_j u_j$. Let $H_\omega^\Lambda$ denote a self-adjoint restriction of $H_\omega$ to $\Lambda \subset \R^d$, similarly for $H_0$. Then the operators $H_0^\Lambda$ and $H_\omega^\Lambda$ have discrete spectrum for all $\omega \in \R$. The Birman-Solomyak formula applies to the pair  $(H_0^\Lambda, H_\omega^\Lambda)$

We conclude this section with a bound on the $SSF$ that will be used in the proof of  Theorem \ref{thm:LipCont1}. 

\begin{lemma}\label{lemma:bound_ssf1}
Under the hypotheses of Corollary \ref{cor:SpAveEquality1}, the SSF $\xi (E; H_{\tau_2}, H_{\tau_1} )$ for $\tau_2 > \tau_1$ satisfies the bound,
\beq\label{eq:bound_ssf1}
\xi (E; H_{\tau_2}, H_{\tau_1} ) \leq {\rm Tr} P_{\tau_1} ([ E - \| u \|^2 ( \tau_2 - \tau_1) , E]).
\eeq
\end{lemma}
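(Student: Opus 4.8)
The plan is to reuse the structure of the analytic eigenvalue branches $E_j(\omega)$ introduced in the proof of Theorem \ref{thm:SpecAveraging1} and to convert the crossing count $\mathrm{card}\,\Gamma(E)$ into an eigenvalue count for $H_{\tau_1}$ via monotonicity together with a uniform Lipschitz bound on the branches. First I would record the derivative formula coming from Feynman--Hellmann, \eqref{eq:monotone1}, which gives $E_j'(\omega) = \langle \psi_j(\omega), u^2 \psi_j(\omega)\rangle$. Since $0 \le u^2$ and $\langle \psi_j(\omega), u^2 \psi_j(\omega)\rangle \le \|u^2\| \le \|u\|^2$ for the normalized eigenvector $\psi_j(\omega)$, each branch is nondecreasing and satisfies the two-sided bound $0 \le E_j'(\omega) \le \|u\|^2$ for all $\omega \in [\tau_1,\tau_2]$. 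This is the only analytic input required.

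Next I would fix $E$ and take an arbitrary index $j \in \Gamma(E)$, so that by the definition of $\Gamma(E)$ there is some $\omega_j(E) \in [\tau_1,\tau_2]$ with $E_j(\omega_j(E)) = E$. Integrating the derivative bound over $[\tau_1, \omega_j(E)]$ yields
\[
0 \;\le\; E - E_j(\tau_1) \;=\; \int_{\tau_1}^{\omega_j(E)} E_j'(\omega)\,\d\omega \;\le\; \|u\|^2\,(\omega_j(E) - \tau_1) \;\le\; \|u\|^2(\tau_2 - \tau_1),
\]
so that $E_j(\tau_1) \in [\,E - \|u\|^2(\tau_2-\tau_1),\, E\,]$. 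In words: every branch that crosses the energy $E$ somewhere in the window $[\tau_1,\tau_2]$ must start, at $\omega = \tau_1$, inside precisely the interval appearing on the right-hand side of \eqref{eq:bound_ssf1}.

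Finally I would count. The analytic branches of the type~A family enumerate the spectrum of $H_{\tau_1}$ with multiplicity, so the number of indices $j$ with $E_j(\tau_1)$ lying in $[\,E - \|u\|^2(\tau_2-\tau_1),\, E\,]$ is exactly ${\rm Tr}\,P_{\tau_1}([\,E - \|u\|^2(\tau_2-\tau_1),\, E\,])$. Since the previous paragraph shows $\Gamma(E)$ is a subset of this index set, combining with \eqref{eq:connectSSF1} gives $\xi(E; H_{\tau_2}, H_{\tau_1}) = {\rm card}\,\Gamma(E) \le {\rm Tr}\,P_{\tau_1}([\,E - \|u\|^2(\tau_2-\tau_1),\, E\,])$, which is the claim. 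I expect the main obstacle to be purely the bookkeeping in this last step: one must be sure that distinct indices in $\Gamma(E)$ correspond to distinct branches counted with the correct multiplicity — so that the number of crossings is genuinely dominated by the count of initial eigenvalues of $H_{\tau_1}$ — and this is where the consistent labeling of the analytic family, including the reduction process at eigenvalue crossings invoked for \eqref{eq:orthog1}, must be used.
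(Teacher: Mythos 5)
Your proposal is correct and follows essentially the same route as the paper: the Feynman--Hellmann bound $E_j'(\omega)=\|u\psi_j(\omega)\|^2\le\|u\|^2$, integration over $[\tau_1,\omega_j(E)]$ to place $E_j(\tau_1)$ in $[\,E-\|u\|^2(\tau_2-\tau_1),\,E\,]$, and the identification of $\xi$ with the crossing count. The paper phrases the final step via $\xi(E;H_{\tau_2},H_{\tau_1})={\rm Tr}\,P_{\tau_1}([0,E])-{\rm Tr}\,P_{\tau_2}([0,E])$ rather than via ${\rm card}\,\Gamma(E)$, but this is the same bookkeeping already established in \eqref{eq:connectSSF1}.
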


\begin{proof}
Let $E(\omega)$ be an eigenvalue of $H_\omega$ crossing $E$ for some vaue $\omega (E) \in [\tau_1, \tau_2]$. If $H_\omega \psi (\omega) = E(\omega) \psi (\omega)$, with $\| \psi (\omega) \| =1$, then by the Feynman-Hellman Theorem  
we have $E^\prime (\omega) \| u \psi(\omega) \|^2 \leq \| u \|^2$. It follows that 
$$
E - E(\tau_1) \leq \|u \|^2 (\omega (E) - \tau_1) \leq \|u\|^2 ( \tau_2 - \tau_1),
$$
which implies the bound \eqref{eq:bound_ssf1} since 
$$
\xi (E; H_{\tau_2}, H_{\tau_1} ) = {\rm Tr}  P_{\tau_1} ([0, E]) -  P_{\tau_2} ([0, E]) ,
$$ 
as follows from the definition of the SSF.
\end{proof}


\section{Lipschitz continuity of the local  DOS}\label{sec:LipCont1}
\setcounter{equation}{0}

In this section, we establish local regularity of the finite-volume DOS function $n_\Lambda (E)$ at low energy without a localization assumption for a restricted family of random potentials. We first mention that under the hypothesis of Theorem \ref{thm:poincare1}, we can prove the Wegner estimate with an explicit form of  the constant.
The Wegner estimate for random Schr\"odinger operators with an absolutely continuous single-site probability measure with density $0 \leq \rho \in L^\infty_0(\R)$ has the form
 \beq\label{eq:wegner1}
 \E^{\Lambda} \{ {\rm Tr} P_\omega^\Lambda (I) \} \leq C_W \| \rho \|_\infty |\Lambda| |I|,
 \eeq
 for $I \subset \R$ and a finite constant $C_W > 0$ that is depends upon $I_+ = \max I$. In the next proposition, we give an explicit form of the constant.

\begin{proposition}\label{prop:wegner1}
Assume hypotheses [H1] and [H2]. 
Let $I = [a,b]$ with $b < (n+1)^2 \pi^2$. We then have
\beq\label{eq:wegner2}
  \E^{\Lambda} \{ {\rm Tr} P_\omega^\Lambda (I) \} \leq |I| |\Lambda| \left( \kappa^{-1} \| \rho \|_\infty (n+1)\left[  1 - \frac{b}{(n+1)^2 \pi^2} \right]^{-1}   \right) ,
\eeq
where $\kappa > 0$ is the lower bound in [H1].
\end{proposition}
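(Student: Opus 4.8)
The plan is to feed the spectral averaging inequality of Theorem~\ref{thm:SpecAveraging1} into the deterministic trace bound of Theorem~\ref{thm:poincare1}, one random variable at a time. First I would take the expectation $\E^{\Lambda}$ of \eqref{eq:localDOS1}; since the prefactor $(1-b/((n+1)^2\pi^2))^{-1}$ and the weights $(1-j^2/(n+1)^2)$ are deterministic, the whole problem reduces to bounding $\E^{\Lambda}\{\langle \varphi_{j,k},P_\omega^\Lambda(I)\varphi_{j,k}\rangle\}$ for each fixed $j\in\{0,\dots,n\}$ and $k\in\widetilde{\Lambda}$.

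For a fixed site $k$, I would freeze the variables $\omega_m$ with $m\neq k$ and write $H_\omega^\Lambda=A_k+\omega_k u_k^2$, where $A_k:=H_0^\Lambda+\sum_{m\neq k}\omega_m u_m^2$ is independent of $\omega_k$ and has discrete spectrum. This is exactly the one-parameter family of Theorem~\ref{thm:SpecAveraging1} with $u=u_k$ and $\omega=\omega_k$. The only mismatch to reconcile is that \eqref{eq:SpAve1} controls the sandwiched expression $\langle\,\cdot\,,u_kP_\omega(I)u_k\,\cdot\,\rangle$, while Theorem~\ref{thm:poincare1} produces the bare matrix element $\langle\varphi_{j,k},P_\omega^\Lambda(I)\varphi_{j,k}\rangle$. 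Here [H1] does the work: since $u_k^2\ge\kappa\chi_k$ and $\varphi_{j,k}$ is supported in $C_k$, the function $g_{j,k}:=u_k^{-1}\varphi_{j,k}$ (extended by zero off $C_k$) is well defined, satisfies $u_kg_{j,k}=\varphi_{j,k}$, and obeys $\|g_{j,k}\|^2=\int_{C_k}|\varphi_{j,k}|^2 u_k^{-2}\le\kappa^{-1}$. Consequently $\langle\varphi_{j,k},P_\omega^\Lambda(I)\varphi_{j,k}\rangle=\langle g_{j,k},u_kP_\omega^\Lambda(I)u_kg_{j,k}\rangle$, and applying \eqref{eq:SpAve1} (which is homogeneous of degree two in the test vector, so the normalization $\|\varphi\|=1$ may be dropped) over any interval $[\tau_1,\tau_2]\supseteq\supp\rho$ gives $\int\langle\varphi_{j,k},P_\omega^\Lambda(I)\varphi_{j,k}\rangle\,d\omega_k\le\kappa^{-1}|I|$.

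Next I would restore the probability weights. The integrand is nonnegative, being $\|P_\omega^\Lambda(I)\varphi_{j,k}\|^2$, so bounding $\rho(\omega_k)\le\|\rho\|_\infty$ and integrating first in $\omega_k$ yields $\int\langle\varphi_{j,k},P_\omega^\Lambda(I)\varphi_{j,k}\rangle\,\rho(\omega_k)\,d\omega_k\le\kappa^{-1}\|\rho\|_\infty|I|$; integrating out the remaining variables against their densities, which carry total mass one, leaves $\E^{\Lambda}\{\langle\varphi_{j,k},P_\omega^\Lambda(I)\varphi_{j,k}\rangle\}\le\kappa^{-1}\|\rho\|_\infty|I|$. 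Substituting this into the expectation of \eqref{eq:localDOS1}, bounding each weight $(1-j^2/(n+1)^2)\le1$, and counting the $n+1$ indices $j$ together with the $|\Lambda|$ unit cubes $C_k$ tiling $\Lambda$, the double sum contributes a factor $(n+1)|\Lambda|$, which assembles precisely into the constant of \eqref{eq:wegner2}.

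The main obstacle is the conversion in the second step: turning the bare projector matrix element into the $u$-sandwiched form demanded by spectral averaging while, at the same time, producing the factor $\kappa^{-1}$. The factorization $\varphi_{j,k}=u_kg_{j,k}$ is the device that accomplishes both, and it is legitimate only because [H1] keeps $u_k$ bounded away from zero on the cube $C_k$ where $\varphi_{j,k}$ lives. The rest is routine: Fubini to isolate the $\omega_k$-integral and elementary term counting.
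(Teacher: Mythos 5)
Your argument is correct and follows the paper's own route: take the expectation of the trace bound \eqref{eq:localDOS1}, convert each matrix element $\langle\varphi_{j,k},P_\omega^\Lambda(I)\varphi_{j,k}\rangle$ into the $u_k$-sandwiched form using the lower bound $u_k^2\ge\kappa\chi_k$ from [H1], and apply the spectral averaging bound \eqref{eq:SpAve1} one variable $\omega_k$ at a time, which yields exactly the constant $\kappa^{-1}\|\rho\|_\infty(n+1)$ after summing over $j$ and $k$. Your factorization $\varphi_{j,k}=u_k g_{j,k}$ with $\|g_{j,k}\|^2\le\kappa^{-1}$ is precisely the rigorous form of the paper's one-line remark that one uses $\chi_k\le\kappa^{-1/2}u_k$ in the inner products.
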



The proof of the proposition follows from the bound on the trace of the spectral projector in \eqref{eq:localDOS1} and the spectral averaging result \eqref{eq:SpAve1}. 
In order to apply \eqref{eq:SpAve1}, we use  the bound $\chi_{k} \leq u_k \kappa^{-\frac{1}{2}}$ in the inner products on the right side of  \eqref{eq:localDOS1}. After taking the expectation and spectral averaging, the result follows by summing over $j \in \{ 1, \ldots, n\}$ and $k \in \widetilde{\Lambda}$.


We define the local density of states ($\ell$DOS) function $n_\Lambda (E)$ by
\beq\label{eq:dosf1}
n_\Lambda (E) := \lim_{\epsilon \rightarrow 0^+} \frac{1}{\epsilon |\Lambda|} \E^{\Lambda} \{ {\rm Tr} P_\omega^\Lambda ((E, E + \epsilon]) \}.
\eeq
By the Wegner estimate, \eqref{eq:wegner2}, we have the bound
\beq\label{eq:dosBd1}
n_\Lambda (E) \leq C_W := \left( \kappa^{-1} \| \rho \|_\infty (n+1)\left[  1 - \frac{E}{(n+1)^2 \pi^2} \right]^{-1}   \right) ,
\eeq
for $E < (n+1)^2 \pi^2$.  The local density of states function $n_\Lambda$ is related to the $\ell$DOS measure defined in \eqref{eq:dos-defn1} by
$$
\mu_\Lambda (I) = \int_I ~ n_\Lambda (E) ~dE.
$$
 We next show that $n_\Lambda(E)$ is Lipschitz continuous in $E$ for energies in the interval $[0, E_0(d)]$, where $E_0(d)$ is defined in \eqref{eq:energyBound2}, near the bottom of the deterministic spectrum.

\begin{theorem}\label{thm:LipCont1}
We assume [H3]: The single-site potential $u_0 = \kappa \chi_0$ and the single-site probability measure is the uniform distribution on $[0,1]$ so that $\rho(s) = \chi_{[0,1]}(s)$.
Let $n_\Lambda (E)$ be the $\ell$DOS function for the local Hamiltonian $H_\omega^\Lambda$, where $\Lambda = [0, L]^d$, with $L \in \N$. For any $0 \leq E_1 < E_2 < E_0(d)$, with $E_0(d)$ defined in \eqref{eq:energyBound2}, there exist a  finite constant $K_1 > 0$, depending only on $E_2$ and $d$, so that
\beq\label{eq:LipCont1}
|n_\Lambda (E_2) - \ n_\Lambda (E_1)| \leq \min \left\{ C_W, K_1 |\Lambda| (E_2 - E_1) \right\} ,
\eeq
where $C_W > 0$ is given in \eqref{eq:dosBd1}.
\end{theorem}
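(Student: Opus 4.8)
The plan is to exploit the special structure of [H3]: since $u_k=\kappa\chi_k$ and $\sum_{k\in\tilde\Lambda}\chi_k=\chi_\Lambda$ acts as the identity on $L^2(\Lambda)$, a simultaneous shift $\omega_k\mapsto\omega_k+t$ of all couplings is an exact translation $H_\omega^\Lambda\mapsto H_\omega^\Lambda+\kappa t$ of the spectrum. First I would turn this into a representation of the density. Writing $N(E;\omega):={\rm Tr}P_\omega^\Lambda([0,E])$, the shift identity gives $\partial_E N=-\kappa^{-1}\sum_k\partial_{\omega_k}N$; integrating against the uniform law on $[0,1]^{\tilde\Lambda}$ and performing the $\omega_k$-integration site by site yields
\[
n_\Lambda(E)=\frac{\kappa^{-1}}{|\Lambda|}\sum_{k\in\tilde\Lambda}g_k(E),\qquad g_k(E):=\E_{\hat\omega_k}\bigl\{\xi(E;H^\Lambda_{\omega_k=1},H^\Lambda_{\omega_k=0})\bigr\},
\]
where $\hat\omega_k$ is the family of couplings at the sites $m\neq k$. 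The same identity follows from the Birman--Solomyak formula \eqref{eq:ssf2} applied in the variable $\omega_k$ with $u=\sqrt{\kappa}\chi_k$; in this form $g_k$ is $\kappa$ times the expected local density of states on the cube $C_k$, and $n_\Lambda$ is its average over $\tilde\Lambda$.

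Next I would reduce the Lipschitz bound to a per-site estimate. Using $\xi(E;H_1,H_0)={\rm Tr}P_0([0,E])-{\rm Tr}P_1([0,E])$ (as in the proof of Lemma \ref{lemma:bound_ssf1}) one has $g_k(E_2)-g_k(E_1)=\E_{\hat\omega_k}\{{\rm Tr}P^\Lambda_{\omega_k=0}(J)-{\rm Tr}P^\Lambda_{\omega_k=1}(J)\}$ with $J:=(E_1,E_2]$, and since each trace is nonnegative,
\[
|g_k(E_2)-g_k(E_1)|\le \E_{\hat\omega_k}\bigl\{{\rm Tr}P^\Lambda_{\omega_k=0}(J)\bigr\}+\E_{\hat\omega_k}\bigl\{{\rm Tr}P^\Lambda_{\omega_k=1}(J)\bigr\}.
\]
Thus it suffices to prove a Wegner-type bound $\E_{\hat\omega_k}\{{\rm Tr}P^\Lambda_{\omega_k=s}(J)\}\le C|\Lambda|(E_2-E_1)$ for the operator with the single coupling $\omega_k$ frozen at $s\in\{0,1\}$; summing over the $|\Lambda|$ sites then gives \eqref{eq:LipCont1} with $K_1\sim\kappa^{-1}C$. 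For this frozen operator I would run the argument of Proposition \ref{prop:wegner1}: apply the low-energy trace estimate \eqref{eq:localDOS2} (legitimate since $E_2<E_0(d)<\pi^2$), and for each cube $C_m$ with $m\neq k$ spectrally average the term $\langle\varphi_{0,m},P(J)\varphi_{0,m}\rangle$ in the still-free variable $\omega_m$ via Theorem \ref{thm:SpecAveraging1} with $u=\sqrt{\kappa}\chi_m$, contributing $\le\kappa^{-1}(E_2-E_1)$ apiece and hence $O(\kappa^{-1}|\Lambda|(E_2-E_1))$ in total.

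The hard part, and the only place the low-energy hypothesis is used, is the single diagonal term $m=k$: the cube $C_k$ whose coupling has been frozen and is therefore unavailable for spectral averaging. Here I would bound it not by averaging but by a deterministic confinement estimate. In the spirit of the Poincar\'e bound of Section \ref{sec:dos-poincare1}, no eigenfunction $\psi_j$ with eigenvalue below $E_0(d)$ can concentrate more than a fixed fraction $\theta_0(E_2)$ of its constant-mode mass on one unit cube, so $|\langle\varphi_{0,k},\psi_j\rangle|^2\le\theta_0(E_2)$, whence $\langle\varphi_{0,k},P^\Lambda_{\omega_k=s}(J)\varphi_{0,k}\rangle\le\theta_0(E_2)\,{\rm Tr}P^\Lambda_{\omega_k=s}(J)$. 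Feeding this and the off-diagonal bound into \eqref{eq:localDOS2} gives the self-improving inequality
\[
W_s\le \Bigl(1-\tfrac{E_2}{\pi^2}\Bigr)^{-1}\Bigl[(|\Lambda|-1)\,\kappa^{-1}(E_2-E_1)+\theta_0(E_2)\,W_s\Bigr],\qquad W_s:=\E_{\hat\omega_k}\{{\rm Tr}P^\Lambda_{\omega_k=s}(J)\},
\]
which can be solved for $W_s$ provided the denominator $1-(1-E_2/\pi^2)^{-1}\theta_0(E_2)$ is positive. The number $E_0(d)$ is precisely the threshold below which $\theta_0(E_2)<1-E_2/\pi^2$, so that this closes the frozen Wegner bound and fixes the constant $K_1(E_2,d)$. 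I expect the crux of the write-up to be exactly this confinement estimate: proving quantitatively that a low-energy state cannot localize its average onto a single unit cube, strongly enough to beat the factor $1-E_2/\pi^2$. It is also the step responsible for the volume factor $|\Lambda|$ in \eqref{eq:LipCont1}, which is why the estimate does not survive the limit $|\Lambda|\to\infty$.

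Finally, the minimum in \eqref{eq:LipCont1} is immediate: the Wegner bound \eqref{eq:dosBd1} gives $0\le n_\Lambda(E_i)\le C_W$ for $i=1,2$, so $|n_\Lambda(E_2)-n_\Lambda(E_1)|\le C_W$ unconditionally, and combining this with the Lipschitz estimate above yields the stated bound.
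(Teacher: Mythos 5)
Your proposal follows the paper's proof essentially step for step: the covering decomposition and Birman--Solomyak reduction to per-site spectral shift differences, the frozen-variable Wegner estimate obtained by spectrally averaging the still-free sites and absorbing the frozen diagonal term into the left side, and the identification of $E_0(d)$ as the threshold at which the confinement constant beats $1-E_2/\pi^2$ all match Section \ref{sec:LipCont1} and the Appendix (Lemma \ref{lemma:wegner_minus_1}). The confinement estimate you correctly single out as the crux is the paper's Lemma \ref{lemma:ef_bound1}, proved by a one-dimensional integration by parts giving $\|\psi_E\|_{L^2(C_0)}\leq (d+4E)/(2d)$, which plays exactly the role of your $\theta_0(E_2)$.
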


 \begin{proof}
1. Hypothesis [H3] provides the covering condition $\sum_{k \in \widetilde{\Lambda}} u_k^2 = \kappa^2 \chi_{\Lambda}$, for some $\kappa > 0$. For $E_2 > E_1$, definition \eqref{eq:dosf1} implies that 
\bea\label{eq:dosf2}
\lefteqn{ n_\Lambda (E_2) -  n_\Lambda (E_1) } \nonumber \\
   & =  & \lim_{\epsilon \rightarrow 0^+} \frac{1}{\epsilon |\Lambda|} \E \left\{ {\rm Tr} P_\omega^\Lambda ([E_2, E_2 + \epsilon )  - {\rm Tr} P_\omega^\Lambda ([E_1, E_1 + \epsilon]) \right\}  \nonumber \\
    & = &  \lim_{\epsilon \rightarrow 0^+} \frac{1}{\kappa^2 \epsilon |\Lambda|} \sum_{k \in \widetilde{\Lambda}} ~\E \left\{ {\rm Tr} ( u_k^2 P_\omega^\Lambda ([E_2, E_2 + \epsilon ))  - {\rm Tr} (u_k^2  P_\omega^\Lambda ([E_1, E_1 + \epsilon])) \right\}  \nonumber \\
 & = &  \frac{1}{\kappa^2 |\Lambda|}    \sum_{k \in \widetilde{\Lambda}}    \E_{\omega_k^\perp} \left\{  \lim_{\epsilon \rightarrow 0^+} \frac{1}{\epsilon} \E_{\omega_k} \left\{ {\rm Tr} (u_k^2  P_\omega^\Lambda ([E_2, E_2 + \epsilon ))
- {\rm Tr}( u_k^2  P_\omega^\Lambda ([E_1, E_1 + \epsilon])) \right\} \right\} , \nonumber \\
\eea
where the interchange of the expectation and the limit may be justified by using the uniform bounds on the $\omega_k$-integrals following from  \eqref{eq:ssf2} and Lemma \ref{lemma:bound_ssf1} so that the Dominated Convergence Theorem applies. 

\noindent
2. By the Birman-Solomyak formula presented in \eqref{eq:ssf2}, we write the limit of the expectation with respect to $\omega_k$ in \eqref{eq:dosf2} as
\bea\label{eq:ssf3}
 \lefteqn{ \lim_{\epsilon \rightarrow 0^+} \frac{1}{\epsilon} \E_{\omega_k} \left\{ {\rm Tr}( u_k^2  P_\omega^\Lambda ([E_2, E_2 + \epsilon )) - {\rm Tr} (u_k^2  P_\omega^\Lambda ([E_1, E_1 + \epsilon])) \right\}} \nonumber \\
& = & \xi (E_2; H_{(\omega_k^\perp, \omega_k =0)}^\Lambda,
 H_{(\omega_k^\perp, \omega_k = 1)}^\Lambda ) -   \xi (E_1; H_{(\omega_k^\perp, \omega_k =0)}^\Lambda,
 H_{(\omega_k^\perp, \omega_k= 1)}^\Lambda )  \nonumber \\
 & = & {\rm Tr} P_{(\omega_k^\perp, \omega_k = 0)}^\Lambda ([E_2, E_1] )  - {\rm Tr} P_{(\omega_k^\perp, \omega_k = 1)}^\Lambda ([E_2, E_1 ])  . 
\eea
In order to bound the expectation with respect to $\omega_k^\perp$ of each trace on the last line of \eqref{eq:ssf3}, we use Lemma \ref{lemma:wegner_minus_1} and obtain
\beq\label{eq:ssf4}
\E_{\omega_k^\perp} \left\{
{\rm Tr} P_{(\omega_k^\perp, \omega_k = 0)}^\Lambda ([E_2, E_1] )  - {\rm Tr} P_{(\omega_k^\perp, \omega_k = 1)}^\Lambda ([E_2, E_1 ]) \right\} \leq 
\frac{c(E_2,d)}{\kappa^2} |\Lambda| |E_2 - E_1| , 
\eeq
where $c(b,d)$ is defined in \eqref{eq:Wegner_cnst1}. 
Combining \eqref{eq:ssf3}--\eqref{eq:ssf4}, we obtain the bound
\beq\label{eq:ssf5}
|  n_\Lambda (E_2) -  n_\Lambda (E_1) | \leq \frac{c(E_2,d)}{\kappa^2} |\Lambda| |E_2 - E_1| , 
\eeq
which is the bound \eqref{eq:LipCont1} with $K_1 := \frac{c(E_2,d)}{ \kappa^2}$. 
\end{proof}

\begin{remark}
We note that the estimate \eqref{eq:LipCont1} is not adequate for controlling the infinite-volume limit. One expects that an additional hypothesis, such as localization, would allow the  removal of the volume factor on the right side of \eqref{eq:LipCont1}. Indeed, during the completion of this note, a preprint of Dolai, Krishna, and Mallick \cite{dkm2019} was posted in which they use localization and obtain Lipschitz continuity of $n_\Lambda (E)$, with a volume independent constant, for $E$ in the region of localization and for a smooth probability density $\rho$. More generally, these authors prove regularity of $n_\Lambda(E)$, depending on the regularity of the single-site probability measure $\rho$, for energies in the region of localization, and obtain regularity results for the infinite-volume limit.
\end{remark}


\begin{appendices}

 \section{Appendix: Some technical results}\label{app:technical1}
\setcounter{equation}{0}

We begin with an estimate on the $L^2$-norm of an eigenfunction of $H_\omega^\Lambda$ restricted to a unit cube $C_0$. 

\begin{lemma}\label{lemma:ef_bound1}
Assume [H1] and [H2] and that $H_\omega^\Lambda$ has Dirichlet boundary conditions on $\Lambda = [-L, L]^d$. Let $\psi_E$ be an eigenfunction of $H_\omega^\Lambda$ with eigenvalue $E$: $H_\omega^\Lambda \psi_E = E \psi_E$, with $\| \psi_E \| =1$ and $E > 0$. Then, for  $C_0$ the unit cube, we have 
\beq\label{eq:ef_bound1}
\| \psi \|_{L^2(C_0)} \leq \frac{d + 4 E}{2d}.
\eeq
\end{lemma}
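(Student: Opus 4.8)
The plan is to combine the a priori energy bound coming from the eigenvalue equation with a localized integration-by-parts (Poincar\'e-type) identity on the unit cube, in the same spirit as Lemma \ref{lemma:lowerbdcube1}, using the nonnegativity of $V_\omega$ to discard the unfavorable terms.

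First I would record the energy estimate. By [H2] the couplings $\omega_k$ are nonnegative, so $V_\omega\ge 0$; testing $H_\omega^\Lambda\psi_E=E\psi_E$ against $\psi_E$ and using $\|\psi_E\|_{L^2(\Lambda)}=1$ gives $\int_\Lambda|\nabla\psi_E|^2\le E$, and in particular $\int_{C_0}|\nabla\psi_E|^2\le E$. This is the only global input, and it is what makes the statement informative: at low energy the eigenfunction cannot concentrate or oscillate too strongly.

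To isolate the local mass $\int_{C_0}|\psi_E|^2$, I would test the eigenvalue equation against $w\psi_E$, where $w(x):=\frac{1}{2d}\sum_{i=1}^d x_i(1-x_i)$ is the explicit weight satisfying $w\ge 0$, $w|_{\partial C_0}=0$, $\|w\|_\infty=\tfrac18$, and $-\tfrac12\Delta w\equiv\tfrac12$. Integrating by parts twice over $C_0$ yields
\[
\int_{C_0} w\,|\nabla\psi_E|^2+\tfrac12\int_{C_0}|\psi_E|^2+\int_{C_0} w\,V_\omega|\psi_E|^2
= E\int_{C_0} w\,|\psi_E|^2+\frac{1}{4d}\int_{\partial C_0}|\psi_E|^2 ,
\]
where the boundary term comes from $\partial_{\hat\nu}w=-\tfrac{1}{2d}$ on each face. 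Since $w\ge 0$ and $V_\omega\ge 0$, the first and third terms on the left are nonnegative and may be dropped, leaving $\tfrac12\int_{C_0}|\psi_E|^2$ controlled by the energy contribution $E\int_{C_0}w|\psi_E|^2\le\tfrac18 E\int_{C_0}|\psi_E|^2$ together with the boundary trace $\frac{1}{4d}\int_{\partial C_0}|\psi_E|^2$. Collecting these explicit constants produced by $w$ and estimating the energy term by $\int_{C_0}|\nabla\psi_E|^2\le E$ is what should yield a bound of the announced form $\frac{d+4E}{2d}=\frac12+\frac{2E}{d}$.

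The main obstacle is the boundary trace $\int_{\partial C_0}|\psi_E|^2$: it enters with the wrong sign for an upper bound and is not controlled by the bulk $L^2$-mass alone, since a priori an eigenfunction could place most of its mass in a thin shell near $\partial C_0$. I would control it by a trace inequality on the unit cube, converting the trace back into $\int_{C_0}|\psi_E|^2$ and $\int_{C_0}|\nabla\psi_E|^2$ and then absorbing, and, where a face of $C_0$ abuts $\partial\Lambda$, by using the Dirichlet data together with Cauchy--Schwarz along coordinate lines (writing the value of $\psi_E$ on such a face as a line integral of a derivative, so that the face integral is dominated by $\int_{C_0}(\partial_i\psi_E)^2\le E$). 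This step is delicate: the natural trace constant on the unit cube is of borderline size relative to the coefficient $\tfrac{1}{4d}$, so the bookkeeping has to be organized carefully for the estimate to close, and the resulting constant, while explicit, is not expected to be sharp. This is precisely the point at which the argument must be carried out with care rather than routine calculation.
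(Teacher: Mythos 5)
There is a genuine gap, and it is not one that more careful bookkeeping can repair: the step you yourself flag as delicate --- absorbing the boundary trace $\frac{1}{4d}\int_{\partial C_0}|\psi_E|^2$ --- provably cannot close. Your identity gives
$\tfrac12\int_{C_0}|\psi_E|^2\le \tfrac{E}{8}\int_{C_0}|\psi_E|^2+\tfrac{1}{4d}\int_{\partial C_0}|\psi_E|^2$,
and the best trace inequality on the unit cube has the form $\int_{\{x_i=0\}}\psi^2\le\int_{C_0}\psi^2+2\int_{C_0}|\psi||\partial_i\psi|$, with equality (up to the vanishing gradient term) for constants. Summing over the $2d$ faces, the trace term contributes at least $\tfrac{1}{4d}\cdot 2d\int_{C_0}\psi^2=\tfrac12\int_{C_0}\psi^2$, which exactly cancels the left-hand side; there is no room left to absorb, and for near-constant $\psi_E$ the identity degenerates to $0\le\tfrac{E}{8}\int_{C_0}\psi_E^2$. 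Note also that the faces of $C_0=[0,1]^d$ do not in general lie on $\partial\Lambda$ for $\Lambda=[-L,L]^d$, so the Dirichlet data is unavailable on $\partial C_0$.

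The deeper problem is that your argument is purely local: its only inputs are the equation on $C_0$ and $\int_{C_0}|\nabla\psi_E|^2\le E$, and these do not imply the claimed bound. For instance $\psi=c\cos(\sqrt{E}\,(x_1-\tfrac12))$ satisfies $-\Delta\psi=E\psi$ on $C_0$ with $\int_{C_0}|\nabla\psi|^2\approx c^2E^2/12$, so the constraint $\int_{C_0}|\nabla\psi|^2\le E$ permits $\int_{C_0}\psi^2\approx c^2\sim 12/E$, which is unbounded as $E\to0^+$. Any proof must therefore use the global normalization $\|\psi_E\|_{L^2(\Lambda)}=1$ and the Dirichlet condition on $\partial\Lambda$ in an essential, nonlocal way. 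That is exactly what the paper does: writing $x=(X_k,x_k)$, it integrates $\partial_\tau(\psi_E^2)$ from the Dirichlet boundary $x_k=-L$ up to the slab $\mathcal{T}_k\supset C_0$, applies Cauchy--Schwarz over the full interval $[-L,L]$, integrates by parts once to convert $\int(\partial_\tau\psi_E)^2$ into $\int(-\partial_\tau^2\psi_E)\psi_E$, and then uses $\|\psi_E\|_{L^2(\Lambda)}^2=1$; averaging over the $d$ coordinate directions reassembles $-\Delta$ and yields $\int_{C_0}|\psi_E|^2\le\frac{1}{2d}(d+4E)$, the global normalization being the source of the additive term $d$ in the numerator. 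I recommend abandoning the weighted local identity and adopting this slab argument.
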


\begin{proof}
This result follows by an integration by parts. We write $x = ( X_k, x_k)$, where $X_k \in [-L,L]^{d-1}$, and $x_k \in [-L, L]$, and we denote a smaller domain by $\mathcal{T}_k := \{  (X_k, x_k) ~|~ x_k \in [0,1], X_k \in  [-L,L]^{d-1} \}.$
Because of the Dirichlet boundary conditions, we have 
\bea\label{eq:ibp1}
\psi_E (X_k, x_k)^2 &=& \int_{-L}^{x_k}  \frac{\partial }{\partial \tau } \left( \psi_E (X_k,\tau)^2 \right) ~d \tau  \nonumber \\
   & = & 2 \int_{-L}^{x_k} \psi_E (X_k,\tau) \frac{\partial \psi_E}{\partial \tau } (X_k,\tau) ~d \tau  \nonumber \\
          & \leq &  {2} \left(  \int_{-L}^{x_k} \psi_E (X_k,\tau)^2 {d \tau } \right)^{\frac{1}{2}} 
                      \left( \int_{-L}^{x_k} \left( \frac{\partial \psi_E}{\partial \tau} (X_k,\tau) \right)^2  ~d \tau \right)^{\frac{1}{2}} \nonumber \\
   & \leq & {2}  \left(  \int_{-L}^L \psi_E (X_k,\tau)^2 {d \tau } \right)^{\frac{1}{2}} 
                      \left( \int_{-L}^L \left( - \frac{\partial^2 \psi_E}{\partial \tau^2}  (X_k,\tau) \right) \psi_E (X_k, \tau)  ~d \tau \right)^{\frac{1}{2}}  \nonumber \\
 & \leq & \frac{1}{2}  \left(     \int_{-L}^L \psi_E (X_k,\tau)^2 {d \tau }
                  +     4  \int_{-L}^L \left( - \frac{\partial^2 \psi_E}{\partial \tau^2}  (X_k,\tau) \right) \psi_E (X_k, \tau)  ~d \tau \right) . 
   \nonumber \\
 & & 
\eea
Integrating each term in the last line of \eqref{eq:ibp1} over  $\mathcal{T}_k$ (recalling that $x_k \in [0,1]$), we obtain
\beq\label{eq:ibp2}
\int_{\mathcal{T}_k} \psi_E (X_k, x_k)^2 ~dX_k dx_k  \leq  \frac{1}{2} \left( 1 +  4 \left\langle  \left( - \frac{\partial^2 \psi_E}{\partial x_k^2} \right), \psi_E \right\rangle_{L^2(\Lambda_L)}  \right) . 
\eeq
Finally, since $C_0 \subset \mathcal{T}_k$ for any $k \in \{ 1, \ldots, d \}$, it follows from \eqref{eq:ibp2} and the positivity of the potential $V_\omega^\Lambda$, that 
\beq\label{eq:final_bd1}
\int_{C_0} | \psi_E(x)|^2 ~dx \leq \frac{1}{d} \sum_{k=0}^d \int_{\mathcal{T}_k}  | \psi_E(x)|^2 ~dx \leq 
\frac{1}{2d} \left( d +  4 \left\langle   H_\omega^\Lambda \psi_E, \psi_E  \right\rangle_{L^2(\Lambda_L)}  \right) .
\eeq
The result follows directly from this and the eigenvalue equation. 
\end{proof}


We apply Lemma \ref{lemma:ef_bound1} in order to derive a version of the Wegner estimate for a random Hamiltonian with one random variable fixed. A similar result was obtained in  \cite[Lemma 4.2]{cgk2010} for more general situations  but with a less explicit constant.

\begin{lemma}\label{lemma:wegner_minus_1}
Assume [H1] and [H2] and that $H_\omega^\Lambda$ has Dirichlet boundary conditions on $\Lambda = [-L, L]^d$.
Let $\tau \geq 0$ and $I = [a,b]$. Then, there exists an energy $E_0 = E_0 (d) < \pi^2$, and a constant $c(b,d) > 0$, depending only on $b$ and $d$, so that for all $0 < b <   E_0$, one has
\beq\label{eq:traceEst1}
\E_{\omega_0^\perp} \{ {\rm Tr} P_{\omega_0^\perp, \tau} (I) \} \leq c(b,d) \kappa^{-2} | \Lambda| |I|.
\eeq
\end{lemma}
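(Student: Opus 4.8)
The plan is to combine three ingredients already available: the Poincar\'e trace bound of Theorem \ref{thm:poincare1}, the deterministic eigenfunction estimate of Lemma \ref{lemma:ef_bound1}, and the spectral averaging bound of Theorem \ref{thm:SpecAveraging1}. The governing idea is that the frozen site $k=0$ (where $\omega_0=\tau$ is fixed) must be handled by a mechanism different from the averaged sites $k\neq0$. Since the Dirichlet operator satisfies $H^\Lambda_{(\omega_0^\perp,\tau)}=-\Delta_{\mathrm{Dir}}+V\geq0$ and $b<E_0(d)<\pi^2$, I would apply the $n=0$ version \eqref{eq:localDOS2} of Theorem \ref{thm:poincare1}, observing that the ground-state Neumann eigenfunction on $C_k$ is the normalized constant $\varphi_{0,k}=\chi_k$. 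For every fixed configuration this gives
\[
{\rm Tr}\,P_{(\omega_0^\perp,\tau)}(I)\leq\Big(1-\tfrac{b}{\pi^2}\Big)^{-1}\sum_{k\in\widetilde\Lambda}\langle\chi_k,P_{(\omega_0^\perp,\tau)}(I)\chi_k\rangle,
\]
and I would split the sum into the frozen term $k=0$ and the remaining terms $k\neq0$.

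For the frozen term no averaging is possible, so I would use Lemma \ref{lemma:ef_bound1} instead. Writing $P_{(\omega_0^\perp,\tau)}(I)=\sum_{E_j\in I}\Pi_{\psi_j}$ and applying Cauchy--Schwarz on the unit cube (recall $|C_0|=1$) gives $|\langle\chi_0,\psi_j\rangle|^2\leq\|\psi_j\|_{L^2(C_0)}^2\leq\frac{d+4E_j}{2d}\leq\frac{d+4b}{2d}$, whence $\langle\chi_0,P(I)\chi_0\rangle\leq\frac{d+4b}{2d}{\rm Tr}\,P(I)$. Feeding this back into the Poincar\'e bound produces the self-improving inequality
\[
\Big[1-\Big(1-\tfrac{b}{\pi^2}\Big)^{-1}\tfrac{d+4b}{2d}\Big]{\rm Tr}\,P(I)\leq\Big(1-\tfrac{b}{\pi^2}\Big)^{-1}\sum_{k\neq0}\langle\chi_k,P(I)\chi_k\rangle.
\]
The bracket on the left is positive precisely when $\frac{d+4b}{2d}+\frac{b}{\pi^2}<1$, that is for $b$ below $E_0(d)=\frac{\pi^2 d}{2(2\pi^2+d)}$, which is the threshold \eqref{eq:energyBound2} and indeed satisfies $E_0(d)<\pi^2$; this is exactly why the statement is confined to low energy. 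Dividing gives ${\rm Tr}\,P(I)\leq C(b,d)\sum_{k\neq0}\langle\chi_k,P(I)\chi_k\rangle$ with $C(b,d)$ the displayed quotient.

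Finally I would take $\E_{\omega_0^\perp}$ and average each surviving term over its own free variable $\omega_k$, $k\neq0$. Using the [H1] bound $\chi_k\leq\kappa^{-1/2}u_k$ I would write $\chi_k=\kappa^{-1/2}u_k h_k$ with $0\leq h_k\leq1$ supported in $C_k$, so that $\langle\chi_k,P(I)\chi_k\rangle=\kappa^{-1}\langle h_k,u_kP(I)u_k h_k\rangle$; Theorem \ref{thm:SpecAveraging1}, applied in $\omega_k$ over an interval containing $\supp\rho$ (the density contributing a factor $\|\rho\|_\infty$, absorbed into $c(b,d)$ and equal to $1$ under [H3]), then yields $\E_{\omega_k}\langle\chi_k,P(I)\chi_k\rangle\leq\kappa^{-1}\|\rho\|_\infty|I|\,\|h_k\|_{L^2(C_k)}^2\leq\kappa^{-1}\|\rho\|_\infty|I|$. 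Summing the $|\widetilde\Lambda|\leq 2^d|\Lambda|$ terms and collecting constants gives $\E_{\omega_0^\perp}\{{\rm Tr}\,P_{(\omega_0^\perp,\tau)}(I)\}\leq c(b,d)\kappa^{-1}|\Lambda||I|$, and since $\kappa\leq1$ by [H1] one may weaken $\kappa^{-1}$ to $\kappa^{-2}$ to arrive at \eqref{eq:traceEst1}. The main obstacle is precisely the frozen site $k=0$: the averaging machinery is unavailable there, and its only substitute is the deterministic Lemma \ref{lemma:ef_bound1}, whose constant $\frac{d+4b}{2d}$ can be absorbed back into the trace only below $E_0(d)$, so the low-energy restriction is structural rather than an artifact of the estimates.
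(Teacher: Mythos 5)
Your proposal is correct and follows essentially the same route as the paper: the $n=0$ Poincar\'e trace bound of Theorem \ref{thm:poincare1}, Lemma \ref{lemma:ef_bound1} to control the frozen site $k=0$ and absorb it into the trace (yielding exactly the threshold \eqref{eq:energyBound2} and, after clearing denominators, the constant \eqref{eq:Wegner_cnst1}), and spectral averaging via $\chi_k\leq\kappa^{-1/2}u_k$ for the remaining sites. Your bookkeeping of the squares in the Cauchy--Schwarz step and of the powers of $\kappa$ is, if anything, slightly more careful than the paper's.
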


\begin{proof}
Let $\psi_E$ be a normalized eigenfunction of $H_{(\omega0^\perp, \tau)}$ with eigenvalue $E \in [a,b]$. From Lemma \ref{lemma:ef_bound1}, it follows that 
\beq\label{eq:ef_bound2}
|\langle \varphi_{0,0}, \psi_E \rangle | \leq \frac{d+ 4 E}{2d},
\eeq
so that with $I = [a,b]$,
\beq\label{eq:tr_bound1}
\langle \varphi_{0,0}, P_{(\omega_0^\perp, \tau)}^\Lambda (I) \varphi_{0,0} \rangle \leq  \frac{d+ 4b}{2d} {\rm Tr}  P_{(\omega_0^\perp, \tau)}^\Lambda (I)  .
\eeq
We now bound the trace according to Theorem \ref{thm:poincare1},
\bea\label{eq:poincare2}
 {\rm Tr}  P_{(\omega_0^\perp, \tau)}^\Lambda (I)  & \leq &  \left(  1 - \frac{b}{ \pi^2} \right)^{-1}  \sum_{k \in \tilde{\Lambda}\backslash \{ 0 \} } ~  \langle \varphi_{0,k} , P_{(\omega_0^\perp, \tau)}^\Lambda (I) \varphi_{0,k}   \rangle  \nonumber \\
 & & +  \left(  1 - \frac{b}{ \pi^2} \right)^{-1}  \left( \frac{d+4 b}{2d} \right) {\rm Tr}  P_{(\omega_0^\perp, \tau)}^\Lambda (I)  ,
\eea
where we used  \eqref{eq:ef_bound1} for the $\varphi_{0,0}$ term. 
We take $b < \pi^2$ 
sufficiently small so that the coefficient of the last trace term on the right in \eqref{eq:poincare2} is bounded above as
\beq\label{eq:energyBound1}
\left(  1 - \frac{b}{ \pi^2} \right)^{-1}  \left( \frac{d+4 b}{2d} \right) < 1,
\eeq
so this term can be moved to the left side. 
Condition \eqref{eq:energyBound1} requires that $b$ satisfy:
\beq\label{eq:energyBound2}
0 < b \leq E_0 (d) := \frac{1}{2} \left( \frac{ \pi^2 d}{ 2 \pi^2 + d}  \right) <  \frac{1}{2}  \pi^2 .
\eeq
and the bound approaches $\frac{1}{2}  \pi^2$ as $d$ becomes large. 
This  results in the bound
\beq\label{eq:poincare3}
 {\rm Tr}  P_{(\omega_0^\perp, \tau)}^\Lambda (I)  \leq  c(b,d) \sum_{k \in \tilde{\Lambda}\backslash \{ 0 \} } ~  \langle \varphi_{0,k} , P_{(\omega_0^\perp, \tau)}^\Lambda (I) \varphi_{0,k}   \rangle ,
\eeq
 where 
\beq\label{eq:Wegner_cnst1}
c(b,d) := \left(  1 -  \frac{b}{\pi^2} - \frac{d + 4b}{2d} \right)^{-1} ,
\eeq  
for $b < E_0$.
The result now follows from \eqref{eq:poincare3} by applying the  spectral averaging result in Theorem \ref{thm:SpecAveraging1}.
\end{proof}

\end{appendices}


\end{document}